\documentclass[10pt,article,oneside]{memoir} 
\usepackage{pdfsync}
\usepackage[utf8]{inputenc}
\usepackage[T1]{fontenc}

\usepackage{lipsum}
\usepackage{multicol}

\usepackage{amsmath, amssymb, amsthm, mathrsfs,nicefrac,MnSymbol}
\usepackage{multirow,tabularx,colortbl,hhline}
\usepackage{nicefrac}

\usepackage{listings}
\usepackage{color}
\usepackage{footnote}

\definecolor{dkgreen}{rgb}{0,0.6,0}
\definecolor{gray}{rgb}{0.5,0.5,0.5}
\definecolor{mauve}{rgb}{0.58,0,0.82}

\lstset{
  language=sh,
  aboveskip=3mm,
  belowskip=3mm,
  showstringspaces=false,
  columns=flexible,
  basicstyle={\small\ttfamily},
  numbers=none,
  numberstyle=\tiny\color{gray},
  keywordstyle=\color{blue},
  commentstyle=\color{dkgreen},
  stringstyle=\color{mauve},
  breaklines=true,
  breakatwhitespace=true,
  tabsize=3
}

\RequirePackage{etex}

\usepackage{graphicx,xcolor}
\graphicspath{{./figures/}}


\usepackage{framed}
\definecolor{shadecolor}{rgb}{0.9,0.9,0.9}
\usepackage{watermark}


\newtheorem{thm}{Theorem}[]
\newtheorem*{thm*}{Theorem}

\newtheorem{lem}[]{Lemma}

\newtheorem{protocol}[]{Protocol}

\theoremstyle{definition}
\newtheorem{defn}[thm]{Definition}
\newtheorem*{defn*}{Definition}

\theoremstyle{remark}

\newtheorem{rem*}[]{Remark}



%
%
%

\usepackage{algorithm,algorithmicx}
\usepackage{algpseudocode}

\usepackage[n,advantage,operators,sets,adversary,landau,probability,notions,logic,ff,mm,primitives,events, complexity,asymptotics,keys]{cryptocode}

\newcommand{\Z}{\mathbb{Z}}

\DeclareMathOperator{\setup}{\mathsf{Setup}}

\DeclareMathOperator{\comm}{\mathsf{Com}}
\DeclareMathOperator{\open}{\mathsf{Open}}
\DeclareMathOperator{\prove}{\mathsf{Prove}}

\DeclareMathOperator{\simulate}{\mathsf{Sim}}




\usepackage[draft,textsize=small]{todonotes}   


\usepackage[bookmarks, colorlinks=false, pdftitle={The Latus PCD scheme based on Marlin}, pdfauthor={author}]{hyperref}
  \usepackage{url}

\author{
    Hab{\"o}ck, Ulrich\\
    \texttt{ulrich@horizenlabs.io}
    \and
    Garoffolo, Alberto\\
    \texttt{alberto@horizenlabs.io}
    \and
    Di Benedetto, Daniele\\
    \texttt{daniele@horizenlabs.io}
}

\begin{document}
\title{%
Darlin: Recursive proofs using Marlin
}
\date{\today
}
\maketitle

\begin{abstract}
This document\footnotemark describes \textit{Darlin}, a succinct zero-knowledge argument of knowledge based on the  Marlin SNARK \cite{Marlin} and the `dlog' polynomial commitment scheme from \cite{BootleGroth, Bulletproofs}.
Darlin addresses recursive proofs by integrating the amortization technique from Halo \cite{Halo} for the non-succinct parts of the dlog verifier, and we adapt their strategy for bivariate circuit encoding polynomials to aggregate Marlin's inner sumchecks across the nodes the recursive scheme.
We estimate the performance impact of inner sumcheck aggregation by about 30\% in a tree-like scheme of in-degree $2$, and beyond when applied to linear recursion.
\end{abstract}

\footnotetext{This paper is the full version of the previously published extended abstract `Darlin: A proof-carrying data scheme based on Marlin'.}

\begin{KeepFromToc}
  \tableofcontents
\end{KeepFromToc}

\chapter{Introduction}
\label{s:Introduction}

Succinct non-interactive arguments of knowledge (SNARKs) are the basis for secure decentralized computations, allowing to verify the correctness of a large number of operations by a single succinct and easy to verify cryptographic proof. 
Since the advent of SNARKs \cite{Groth:2010, Gennaro:2013} practical proof systems followed soon after, e.g. Pinocchio \cite{Pinocchio}, Groth16 \cite{Groth:2016}, and Groth17 \cite{Groth:2017}.  
Whereas the first SNARKs are intrinsically connected to pairings via non-standard knowledge commitments, 
proof systems from the second wave, such as Sonic \cite{Sonic}, Aurora \cite{Aurora}, Marlin \cite{Marlin}, or Plonk \cite{Plonk}, are built in a modular way on any polynomial commitment scheme.

 
To scale over large amounts of data to be processed, recursive arguments or more generally \textit{proof-carrying data (PCD) schemes} \cite{PCD, SNARKsfromPCD}, are inevitable. 
Recursive arguments verify the existence of a previous such, and their performance is largely dependent on how efficient the verifier itself is translated into an argument. 
The issue of such a translation step is that typically the assertion to be proven is arithmetized (as a relation or circuit) over a field of a different characteristic than the proof itself, and simulating the arithmetics of a `foreign' field is costly.
The most common approach to tackle the problem is using a $2$-cycle of elliptic curves \cite{AmicablePairs, Cycles}.
Such cycles are pairs of elliptic curves in which the subgroup of one curve is of the same prime order as the base field of the other. 
Applied to pairing-based SNARKs the cycle approach requires high field sizes. 
The only known cycles are based on MNT curves of low embedding degree \cite{Cycles}, and as such they demand field sizes beyond $1,000$ bit to meet a reasonable level of security  \cite{GuillevicTNFS}\footnotemark. 
\footnotetext{
\cite{Coda} uses a cycle of MNT4/MNT6 curves with $753$ bit field sizes targeting a security level of $128$ bit.
However, improvements on the towered number field sieve \cite{GuillevicTNFS, GuillevicMNT} enforce to increase the field size up to $1,000$ bits.
}%
Second wave SNARKs are not necessarily bound to pairings, hence allow to use cycles of non pairing-friendly curves \cite{Halo, Halo2}, or such in which at least one of the curves is not pairing-friendly \cite{Mina}.  
Although allowing for smaller field sizes, the use of non pairing-friendly cycles introduces another issue.
Due to a lack of better alternatives, such constructions apply (a variant of) the `dlog' polynomial commitment scheme from \cite{BootleGroth} the verifier of which is linear in the size of the circuit to be proven; a serious obstacle for efficient recursion.
In their seminal work \cite{Halo}, Bowe et al. showed how to overcome the problem of linear verifier size by a novel approach called \textit{nested amortization}. 
In nested amortization the proof system aggregates the computational `hard parts' of the verifier outside the circuit, reducing the verification of all of them to a single expensive check at the recursion end.

Since \cite{Halo} amortization schemes became an active field of research.
B\"unz et al. \cite{Buenz} gave a more modular approach to the \cite{Halo} concept of amortization (named accumulation scheme therein).
However, their approach is less performant than the one in \cite{Halo}, which directly integrates the amortization rounds into the argument system.
Boneh et al. \cite{HaloInfinite} extend the concept of amortization to \textit{private} aggregation schemes for polynomial commitments, which allow to aggregate entire opening proofs along the nodes of a PCD scheme.
An even more radical approach for rank one constraint systems (R1CS) is followed by  \cite{PrivateAggregationR1CS}, who aggregate entire R1CS solutions over recursion.
Although both approaches lead to a significant speed-up of recursion, they come at the cost of increased proof sizes for the PCD.
The private witnesses aggregated across the nodes are as large as the circuit itself. 
For a Marlin verifier it is at about $1 \text{ MiB}$ at minimum, and multiples of that in typical applications \cite{PrivateAggregationR1CS}.

\medskip
In this document we describe the \textit{Darlin} proof carrying data scheme, the recursive SNARK for a Latus sidechain of Zendoo \cite{Zendoo}, a blockchain system which supports cross-chain communication. 
Latus sidechains are highly customizable blockchains which share the same token as the Zendoo mainchain they are bootstrapped from, and the Darlin scheme is used to provide succinct proofs of correct side chain state transitions.
Darlin is based on the Marlin argument system, modified in order to handle the aggregation of both Marlin's inner sumchecks and the `dlog' hard parts. 
According to our estimates, we expect the advantage of Darlin over standard Marlin (without inner sumcheck aggregation) to be about $30\%$ when `merging' two previous proofs, at the cost of only tripling the proof size, cf. Table \ref{t:DarlinVsMarlin}. 


\begin{table}
\caption{
The impact of inner sumcheck amortization: 
Comparison of Marlin/dlog \cite{Marlin} versus Darlin for a PCD node which verifies two previous proofs, both using the Pasta curves \cite{PastaCurves}, and using the optimization techniques from \cite{Halo} to  reduce the size of the verifier in circuit. 
We ``segmentize'' dlog commitments to speed up the prover.
See Section \ref{s:Performance} for details.
As our implementation is not ready yet, the prover times are \textit{estimates} for an Amazon EC2 G4dn instance (4 Intel Xeon@2.5 GHz + 1 NVIDIA T4). 
}
\label{t:DarlinVsMarlin}
\begin{center}
\hspace*{-0.3cm}
\begin{tabular}{|l|c|c|c|c|}
\hline
\multicolumn{2}{|l|}{dlog segment size$^\dag$} &  $2^{19}$ & $2^{18}$ &$2^{17}$
\\\hline
\multirow{2}{*}{constraints}  & Marlin$^{*}$ &  $\approx 320$ k & $\approx  384$ k & $\approx 520$ k
\\
& Darlin &   $\approx 290$ k & $\approx 320$ k & $\approx 390$ k
\\\hline
\multirow{2}{*}{proof size}  &  Marlin$^{*}$ & $\approx 4.2$ kB & $\approx 4.6$ kB & $\approx 5.3$ kB
\\
 & Darlin &  $\approx 15.3$ kB & $\approx 15.7$ kB &  $\approx 16.8$ kB
\\\hline
\multirow{2}{*}{prover time}  &  Marlin$^{*}$ &  $\approx 16.5$ s & $\approx 15.9$ s & $\approx 15.38$ s
\\
 & Darlin &   $\approx 12.3$ s ($9.6$ s$^{**}$) & $\approx 11.7$ s ($9.1$ s$^{**}$) & $\approx 11.4 $  ($8.8$ s$^{**}$)
\\\hline
\multicolumn{5}{r}{$^{*}${\small Assuming R1CS density $d=2$, which is large enough in our applications.
}}
\\
\multicolumn{5}{r}{$^{**}${\small only at the two lowest levels of a proof tree, where aggregation is trivial.}}
\end{tabular}
\end{center}
\end{table}

The document is organized as follows. 
In Section \ref{s:CoboundarySumcheck} we describe a variant of the univariate sumcheck argument from \cite{Aurora, Marlin}, inspired by the grand product argument of Plonk \cite{Plonk}. 
This variant does not rely on degree bound proofs and allows a more lightweight zero-knowledge randomization. 
In Section \ref{s:Marlin} we informally describe our variant of Marlin, which besides using a slightly different matrix arithmetization applies the sumcheck argument from Section \ref{s:CoboundarySumcheck}.
In Section \ref{s:Amortization} we recapitulate the amortization strategy for the dlog hard parts, explain the aggregation of Marlin's inner sumcheck across multiple circuits, and summarize the main recursive argument.
All formal definitions and proofs are postponed to the appendix, Section \ref{s:Appendix}.


%

\chapter{Preliminary notes}

Whenever appropriate, we formulate our protocols as \textit{algebraic oracle proofs}, with oracles as an information-theoretic model for homomorphic polynomial commitments.
An algebraic oracle proof is a multi-round protocol in which the prover responds to verifier challenges with oracles for some \textit{low-degree} polynomials, receives another challenge from the verifier. 
The prover replies with some other oracles, and so on.
The verifier is allowed to query these oracles for the values of any linear combination of their polynomials at any point she chooses.
As in algebraic holographic proofs \cite{Marlin}, the verifier may access some of the inputs only via oracle queries, but we do not assume that these oracles stem from a (circuit-specific) setup phase.
Algebraic oracle proofs can be viewed as a variant of fully linear interactive oracle protocols \cite{FullyLinearPCPs}, considering an evaluation query as a linear functional of the polynomial to be queried.
However, we shall not dwell on their separate information-theoretic security properties, nor we explicitly provide the compiler which transforms these into their corresponding ones for the resulting interactive argument systems when instantiating the oracles by a polynomial commitment scheme.
Instead, the proof of our main recursive argument from Section \ref{s:FullProtocol} relies on a compiler as used implicitly in the security analysis of the batch evaluation protocol from \cite{HaloInfinite}.

\chapter{A cohomological sumcheck argument}
\label{s:CoboundarySumcheck}

Let $F$ be a finite field, $H$ be a multiplicative subgroup of order $n$, and assume that $p(X)$ is a polynomial of arbitrary degree.
The univariate \textit{sumcheck argument} from \cite{Aurora, Marlin} is an algebraic oracle proof for showing that
\[
\sum_{x\in H} p(x) = 0.
\]
The sumcheck argument is the key ingredient to Marlin's way of proving a witness polynomial satisfying the rules of a given circuit (see Section \ref{s:Marlin}).
It is based on the fact that the above sum is equal to $n$ times the constant term of the polynomial, if $p(X)$ is of \textit{reduced form}, i.e. of a degree strictly less than the domain size $|H|=n$.
Hence showing that the reduced form of $p(X)$ has constant term zero, i.e.
\begin{equation}
\label{e:Sumcheck}
p(X) = X\cdot g(X) + h(X) \cdot (X^n-1),
\end{equation}
for some polynomials $h(X)$ and $g(X)$ whereas $deg(g(X)) < n-1$, proves the claimed sum.
To convince the verifier of \eqref{e:Sumcheck} the prover provides the oracles for $p(X)$ $g(X)$ and $h(X)$, which we denote by 
\[
[p(X)], [g(X)], [h(X)],
\]
together with a proof that $deg(g(X))\leq n-1$.
In response the verifier samples a random challenge $z\sample F$ on which the oracles are queried for $p(z)$, $g(z)$, $h(z)$.
These evaluations are used to validate the identity \eqref{e:Sumcheck} at $X=z$. 
In order to obtain (honest verifier) zero-knowledge, the prover samples a random `mask' polynomial $s(X)$ of degree at least $n$ and proves that
\begin{equation}
\label{e:SumcheckZk}
\hat p(X) = p(X) + s(X)
\end{equation}
sums up to $\sigma = \sum_{z\in H} s(z)$, which  is done by an ordinary sumcheck argument for $\hat p(X) - \sigma/n$.
See \cite{Marlin} for the details.

\medskip
Our sumcheck argument carries over the grand product argument from Plonk \cite{Plonk} to the additive setting.
Instead of using the reduced form of the polynomial $p(X)$ in question, the prover shows that the additive \textit{cocycle} with respect to the group action of $\Z$ on $H$ defined by $g$,
\begin{equation}
f_p(k,X) = \sum_{i=0}^{k-1} p(g^i\cdot X)
\end{equation}  
is a \textit{coboundary}, which is characterized by the following folklore Lemma.
\begin{lem}
\label{lem:coboundary}
Let $H$ be a multiplicative subgroup of a finite field $F$ and let $g$ be a generator of $H$.
For any univariate polynomial $p(X)$ of arbitrary degree we have $\sum_{z\in H} p(z) = 0$ if and only if there exists a polynomial $U(X)$ such that 
\begin{equation}
\label{e:sumcheckBoundary}
U(g\cdot X) - U(X) =  p(X) \mod (X^n-1).
\end{equation}
\end{lem}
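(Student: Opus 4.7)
The plan is to prove both directions separately, with the forward direction coming from a telescoping/bijection argument and the reverse direction from an explicit construction of $U(X)$ via partial sums along the orbit of $g$, then Lagrange-interpolated to a polynomial.

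For the forward implication, suppose that $U(g\cdot X) - U(X) \equiv p(X) \pmod{X^n-1}$. I would evaluate both sides at an arbitrary $z \in H$. Since $z^n = 1$, the modular reduction vanishes, giving the pointwise identity $U(g\cdot z) - U(z) = p(z)$ for every $z \in H$. Summing over $H$ and using that multiplication by the generator $g$ is a bijection $H \to H$, the right-hand side telescopes to $0$, yielding $\sum_{z \in H} p(z) = 0$.

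For the reverse direction, I would first reduce $p(X)$ modulo $X^n-1$: writing $p(X) = q(X)(X^n-1) + \bar p(X)$ with $\deg \bar p < n$, the hypothesis $\sum_{z \in H} p(z) = 0$ is unchanged (since $z^n - 1 = 0$ on $H$), and the target congruence only involves $p(X)$ mod $X^n-1$, so it suffices to construct $U$ for $\bar p(X)$. I would then define $U$ on the $n$ points of $H$ by the partial cocycle sums
\[
U(g^k) \;=\; \sum_{i=0}^{k-1} \bar p(g^i), \qquad k = 0, 1, \ldots, n-1,
\]
which by construction satisfies $U(g\cdot z) - U(z) = \bar p(z)$ for all $z \in H \setminus \{g^{n-1}\}$. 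The potential wrap-around at $k = n-1$ is exactly where the hypothesis enters: the required identity at $z = g^{n-1}$ becomes $U(1) - U(g^{n-1}) = \bar p(g^{n-1})$, i.e. $\sum_{i=0}^{n-1} \bar p(g^i) = 0$, which holds by assumption. Finally I extend $U$ from $H$ to a polynomial $U(X)$ of degree $<n$ by Lagrange interpolation. Then $U(g\cdot X) - U(X)$ and $\bar p(X)$ agree on all of $H$, and two polynomials that coincide on the $n$ roots of $X^n-1$ must be congruent modulo $X^n-1$, giving the desired identity.

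The main obstacle, as I see it, is not any individual calculation but rather keeping the modular reduction clean: one has to be careful that the cocycle condition $U(gX) - U(X) = p(X)$ is only required modulo $X^n-1$ (otherwise the degrees on the two sides would force strong restrictions on $p$), and to verify that the self-consistency of the partial sums around the cycle $1, g, g^2, \ldots, g^{n-1}, 1$ is precisely encoded by the sum-zero hypothesis. A shorter alternative proof would instead expand $\bar p(X) = \sum_{i=0}^{n-1} c_i X^i$, note that $\sum_{z\in H}\bar p(z) = n\cdot c_0$ forces $c_0 = 0$, and set $U(X) = \sum_{i=1}^{n-1} c_i/(g^i-1)\cdot X^i$ using that $g^i \neq 1$ for $1 \le i \le n-1$; either presentation works, but the cocycle construction matches the terminology of the lemma and motivates the notation $f_p(k,X)$ introduced just above the statement.
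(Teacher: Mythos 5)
Your proof is correct and takes essentially the same route as the paper's: the nontrivial direction is handled by the identical partial-sum construction of $U$ on $H$ (the paper keeps $U(1)$ as a free constant and omits the explicit reduction to $\bar p$, but that changes nothing), with the wrap-around at $g^{n-1}$ absorbed by the sum-zero hypothesis, and the other direction is the same telescoping observation the paper calls obvious. The closed-form alternative you sketch at the end, $U(X)=\sum_{i\ge 1} c_i/(g^i-1)\cdot X^i$, is a valid and slightly slicker variant not in the paper, using that $g$ is a generator to make the denominators nonzero and that $n\cdot c_0=\sum_{z\in H}\bar p(z)=0$ with $n$ invertible in $F$.
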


\begin{proof}
Suppose that $\sum_{z\in H} p(z) = 0$. 
Define $U(X)$ on $H$ by intitializing $U(g^0) = U(1)$ to any arbitrary value, and setting
\[
U(g^k) = U(1) +\sum_{i=0}^{k-1} p(g^i)
\]
for $k=1,\ldots, n-1$.
By definition $U(g^{k+1}) = U(g^k) + p(g^k)$ for all $k$, $0\leq k\leq n-2$.
The equation also holds for $k=n-1$, since the full cycle sum $\sum_{i=0}^{n-1}p(g^i) = \sum_{z\in H} p(z) $ vanishes. 
This shows that $U(g\cdot z)-U(z) = p(z)$ for all $z$ in $H$, thus any extension $U(X)$ beyond $H$ satisfies the claimed identity $U(g\cdot X)-U(X)=p(X) \bmod (X^n-1)$.
The other direction of the proof is obvious.
\end{proof}

\medskip
The main advantage of the coboundary approach is that the algebraic oracle proof for equation \eqref{e:sumcheckBoundary} allows a more lightweight zero-knowledge randomization than that of equation \eqref{e:Sumcheck}:
Since no reduced form is needed for $U(X)$,  we can simply randomize $U(X)$ by means of the vanishing polynomial of $H$, 
\begin{align}
\label{e:sumcheckRand2}
\hat U(X) &= U(X) + (c_0 + c_1\cdot X) \cdot (X^n-1),
\end{align}
with uniformly random $c_0$, $c_1\sample F$, assuming that $\hat U(X)$ is not queried beyond the sumcheck protocol.
We describe the sumcheck argument as an algebraic oracle proof for polynomials from $R= F[X]/(X^n-1)$ with the aim to prove that the prover knows an element from  $R$ which is subject to the sumcheck $\sum_{x\in H} p(X) = 0$.


\begin{protocol}[Coboundary sumcheck]
\label{p:coboundarySumcheck}
Let $H$ be a multiplicative subgroup of a finite field $F$, $g$ be a generator of $H$ having order $n$.
The prover is given $p(X)$ from $R= F[X]/(X^n-1)$ subject to $\sum_{x\in H}p(x) = 0$, and the verifier is given the oracle of a random representant $\hat p(X)= p(X) + r(X)\cdot (X^n-1)$, where $r(X)$ is sampled uniformly from the set of polynomials of degree strictly less than $b+1$.
\footnote{The bound $b\geq 0$ corresponds to the maximum number of allowed queries for $[\hat p(X)]$ beyond the sumcheck protocol.}
\begin{enumerate}
\item 
The prover $P$ computes $U(X)$ of $deg(U(X))<n$ according to the coboundary identity \eqref{e:sumcheckBoundary}.
It computes $\hat U(X)$ as in \eqref{e:sumcheckRand2},  with $c_0,c_1\sample F$, and the quotient polynomial $h(X)$
satisfying
\begin{equation}
\label{e:coboundaryZk}
\hat U(g\cdot X)- \hat U(X) = \hat p(X) + h(X)\cdot (X^n-1).
\end{equation}
$P$ then sends $[\hat U(X)]$, $[h(X)]$ to the verifier.
\item
The verifier $V$ samples a random challenge $z\sample F\setminus H$ and queries the oracles  $[\hat U(X)]$, $[h(X)]$, and $[\hat p(X)]$ for their values at $z$. 
(The oracle aborts, if $z\in H$.)  
$V$  uses these values to verify identity \eqref{e:coboundaryZk} at $X=z$, and accepts if valid. 
(Otherwise, the verifier rejects.)
\end{enumerate}
\end{protocol}




The security analysis of Protocol \ref{p:coboundarySumcheck} (applied to a specific $\hat p(X)$) is given in the course of the proof of Theorem \ref{thm:CompleteProtocol}.
As a separate algebraic oracle proof it is perfectly complete and computationally knowledge sound, assuming the size of $H$ is negligible compared to the size of the field $F$.  
It is succinct and perfectly honest verifier zero-knowledge, assuming that each the oracles $[\hat p(X)]$ is queried outside the protocol at most another $b$ times (and $[\hat U(X)]$ is not queried at all). 
The latter is an immediate consequence of the fact that the conditional distribution of 
\[
(v_1,v_2,v_3,v_4)= \big(\hat U(g\cdot z), \hat U(z), \hat p(z), h(z)\big),
\]
conditional to $z\notin H$, is uniform on the relation $\mathcal R_z = \{(v_1,v_2,v_3,v_4)\in F^4: v_1 -v_2 - v_3 = v_4\cdot (z^n-1)\}$.
If one instantiates the oracle with a computationally binding (Definition \ref{def:Binding}) and perfectly hiding (Definition \ref{def:Hiding}) polynomial commitment scheme, the opening proof of which is an argument of knowledge (Definition \ref{def:ArgumentOfKnowledge}), then the protocol is compiled into a succinct honest verifier zero-knowledge argument of knowledge.


\chapter{Coboundary Marlin}
\label{s:Marlin}


This section describes \textit{Coboundary Marlin}, a slight variant of the Marlin SNARK \cite{Marlin}.
We introduce two changes:
First, we replace Marlin's  sumcheck argument by the coboundary argument from Section \ref{s:CoboundarySumcheck}. 
Second, we\footnote{We would like to thank A. Querol for pointing out that \cite{Lunar} also choose the Lagrange kernel. 
As a consequence, our version of the lincheck is exactly the same as theirs.} make use of the Lagrange kernel
\begin{equation}
\label{e:LagrangeKernel}
L_n(X,Y) = \frac{1}{n}\cdot \frac{Y \cdot (X^n-1) - X\cdot (Y^n-1)}{X - Y}
\end{equation}
instead of the non-normalized version $R(X,Y)=\frac{X^n - Y^n}{X-Y}$. 
The Lagrange kernel shares the same key properties as $R(X,Y)$.
It can be evaluated \textit{succinctly}, and allows a \textit{practical} sumcheck representation for the bivariate circuit polynomials, as shown below.
However, we  point out that our favor for the Lagrange kernel is mainly for esthetic reasons.
Using it allows us to argue directly with the  bivariate circuit polynomials instead of a derivative in both of Marlin's sumcheck arguments as well as our aggregation strategy from Section \ref{s:Amortization}.

\section{Arithmetization}
\medskip
We assume an arithmetic circuit $\mathcal C$ over $F$ being represented by a \textit{rank-one constraint system (R1CS)}, i.e.
\begin{equation}
\label{e:R1CS}
(A\cdot y)\odot(B\cdot y) = C\cdot y,    
\end{equation}
where we assume that $A$, $B$, $C$ are $n\times n$ matrices over $F$, $\cdot$ is the vector matrix product and $\odot$ denotes the entry-wise (Hadamard) product of vectors.
The witness vector $y\in F^n$ is composed of a public part $x$ and a private part $w$, i.e.  $y=(x\|w)$.
Notice assuming quadratic matrices is no loss in generality, as the constraint system may always be padded with dummy constraints or variables. 
Moreover, we presume that $|F|-1$ is divisible by a high power of two, assuring the existence of sufficiently large multiplicative subgroups of 2-adic order.
Subgroups of such smooth order allow for a fast Fourier transform which runs in time $O(n\log(n))$, where $n$ is the order of the subgroup.
(In the sequel we call such subgroups \textit{FFT domains}.)

In Marlin the R1CS equations are expressed over the FFT domain $H=\{z\in F : z^n-1=0\}$ using Lagrange encoding.
That is, given an arbitrary enumeration $\{z_1,\ldots,z_n\}$ of $H$ a vector $y=(y_k)_{k=1}^n$ is associated with the polynomial 
\begin{equation}
\label{e:LagrangeRepresentation}
    y(X) =  \sum_k y_k \cdot L(X, z_k).
\end{equation}
In other words, $(y_k)$ is the vector of coordinates with respect to the Lagrange basis $(L(X,x_k))_k$. 
Therefore $y\in F^n$ is a solution of \eqref{e:R1CS} if and only if its associated polynomial $y(X) =  \sum_k y_k \cdot L(X, z_k)$ satisfies
\begin{align}
\label{e:QAPc}
y_A(X)\cdot y_B(X) & = \sum_{z\in H}  C(X,z)\cdot y(z) \mod (X^n-1),
\\
\intertext{where}
\label{e:QAPa}
y_A(X) &= \sum_{z\in H}  A(X,z)\cdot y(z) \mod (X^n-1),
\\
\label{e:QAPb}
y_B(X)  &= \sum_{z\in H}  B(X,z)\cdot y(z)  \mod (X^n-1).
\end{align}
In these equations,  $A(X,Y)$, $B(X,Y)$, $C(X,Y)$ are the bivariate polynomials with the entries of the R1CS matrices $A$, $B$, $C$ respectively as Lagrange coordinates, 
\begin{equation}
\label{e:BivariateMatrixRepresentation}
M(X,Y) =  \sum_{i,j=1}^n M_{i,j} \cdot L(X, z_i)\cdot L(Y, z_j),
\end{equation}
for $M=A,B,C$.
The double sum in \eqref{e:BivariateMatrixRepresentation} is made amenable to a univariate sumcheck argument by indexing its non-zero terms over yet another FFT domain  $K=\{ w\in F : w^m - 1 = 0\}$, again assuming the existence of a sufficiently large smooth multiplicative subgroup. 
As in Marlin, we denote by
\begin{equation}
\label{e:valANDrowANDcol}
val_M(X), row_M(X), col_M(X)\in F[X]/(X^m-1)
\end{equation} 
the polynomials of degree $<m$ which index $M$'s non-zero values, their row and column indices (the latter two regarded as points from $H$, as in \eqref{e:BivariateMatrixRepresentation}), so that   
\begin{align*}
M(X, Y) &= \sum_{w \in K } val_M(w) \cdot L(X , row_M(w)) \cdot L(Y, col_M(w)).
\end{align*}
Since   $L_n(X, z)= \frac{1}{n}\cdot \frac{z \cdot (X^n-1)}{X - z}$ whenever $z$ is from $H$, we have
\begin{multline}
\label{e:MatrixRepresentationSumcheck}
M(X, Y)
=\frac{(X^n-1)\cdot (Y^n-1)}{n}
\\
\cdot\sum_{w \in K } 
\frac{val_M(w) \cdot row_M(w)\cdot col_M(w)}{\left(X - row_M(w)\right)\cdot\left( Y - col_M(w)\right)} \mod (X^m-1).
\end{multline}
This representation, which differs slightly from \cite{Marlin}, is the one we use for the second sumcheck argument, the `inner sumcheck'.
We assume that for $M=A,B,C$, the precomputed polynomials
\begin{align}
\label{e:rowcol}
row.col_M(X) &= row_M(X)\cdot col_M(X) \mod (X^m-1),
\\
\label{e:valrowcol}
val.row.col_M(X)&= val_M(X)\cdot row_M(X)\cdot col_M(X) \mod (X^m-1),
\end{align}
regarded of degree $<m$, are also part of the verifier key.

\section{The protocol}

In Marlin, public circuit inputs $x=(x_i)$ define the Lagrange representation of the \textit{input polynomial} 
\[
x(X) = \sum_{i} x_i \cdot L_\ell(x_i, X) \in F[X]/(X^\ell - 1)
\]
over an properly sized input domain $I\leq H$ of size $\ell$, and the full circuit state polynomial $y(X)$ is combined via
\begin{equation}
y(X) =  x(X) + (X^\ell - 1 )\cdot w(X),
\end{equation}
using a gauged witness polynomial $w(X)\in F^{< n - \ell}[X]$.
The prover provides the oracles for the private witness polynomial $w(X)$, $y_A(X)$, $y_B(X)$ and convinces the verifier of the R1CS identities \eqref{e:QAPc}, \eqref{e:QAPa}, and \eqref{e:QAPb}.
These three identities are reduced to a single one by building a random linear combination based on a challenge $\eta\sample F$, i.e.
\begin{equation}
\label{e:lincheck}
y_\eta(X) = \sum_{z\in H}  T_\eta(X,z) \cdot y(z)  \mod (X^n-1),
\end{equation}
with
\begin{equation*}
y_\eta(X)  = y_A(X) + \eta \cdot y_B(X) + \eta^2\cdot y_A(X)\cdot y_B(X),
\end{equation*}
and
\begin{equation*}
T_\eta (X,Y) =A(X,Y)+ \eta\cdot B(X,Y)+ \eta^2\cdot C(X,Y).
\end{equation*}
The linear identity \eqref{e:lincheck} is reduced to a sumcheck over $H$ by sampling a polynomial $R(X,\alpha)$ using a suitable kernel $R(X,Y)$, $\alpha\sample F$, and applying it via scalar product to both sides of the equation.
This yields
\begin{equation*}
\sum_{z\in H}  \langle R(X,\alpha),T_\eta(X,z)\rangle_H \cdot y(z) 
=\langle R(X,\alpha), y_\eta(X)\rangle_H,
\end{equation*}
hence
\begin{equation}
\label{e:OuterSumcheckR}
\sum_{z\in H}  \langle R(X,\alpha),T_\eta(X,z)\rangle_H \cdot y(z) - R(z,\alpha)\cdot y_\eta(z) = 0.
\end{equation}
Choosing the Lagrange kernel $L_n(X,Y)$ for $R(X,Y)$, $\langle L_n(X,\alpha), T_\eta(X,z)\rangle_H = T_\eta(\alpha,z)$, since $T_\eta(X,z)$ is of degree less than $n$ (see Appendix \ref{s:LagrangeKernel}). 
Hence equation \eqref{e:OuterSumcheckR} is equal to
\begin{equation}
\label{e:OuterSumcheck}
\sum_{z\in H}  T_\eta(\alpha,z) \cdot y(z) -  L_n(z,\alpha)\cdot  y_\eta(z) = 0.
\end{equation}
Equation \eqref{e:OuterSumcheck} is the central identity to be proven by the protocol.


\medskip
We describe the protocol as algebraic oracle proof.

\subsubsection{Initialization}
In the first step the prover computes the polynomials\footnote{%
Unless stated otherwise we assume polynomials $p(X)$ from $F[X]/(X^n-1)$ of reduced form, i.e. of degree $<n$.
} 
$w(X)$,  $y_A(X)$, $y_B(X)\in F[X]/(X^n - 1)$ from their Lagrange representations, and chooses random representants 
\[
\hat w(X), \hat z_A(X), \hat z_B(X) \in F^{<n+1}[X] 
\]
according to the sampling rule $\hat p(X) \sample p(X) + F \cdot (X^n-1)$ using randomizer polynomials of degree zero.
It sends their oracles $[\hat w(X)]$, $[\hat y_A(X)]$, $[\hat y_B(X)]$ to the verifier,  who returns the randomnesses $\eta\sample F$ and $\alpha\sample F\setminus H$ for Equation \eqref{e:OuterSumcheck}.

\subsubsection{Outer sumcheck}
To prove equation \eqref{e:OuterSumcheck} we apply the coboundary argument from Section \ref{s:CoboundarySumcheck} to $\hat p(X) := T_\eta(\alpha, X)\cdot \hat y(X) - L_n(X,\alpha) \cdot \hat y_\eta(X)$, where 
\begin{align*}
\hat y(Y) &:= x(Y) + (Y^\ell-1)\cdot \hat w(Y), 
\\
\hat y_\eta(Y) &:=  \hat y_A(Y)+\eta\cdot \hat y_B(Y) + \eta^2\cdot \hat y_A(Y)\cdot \hat y_B(Y).
\end{align*}
The prover computes the boundary polynomial $U_1(X) \in F[X]/(X^n - 1)$, chooses a random representant 
\[
\hat U_1(X)\in  F^{<n+2}[Y]
\]
of it, and computes $h_1(X)\in F^{< 2\cdot n + 1}[X]$ for the outer sumcheck identity
\begin{multline}
\label{e:OuterSumcheckCoboundary}
T_\eta(\alpha,X) \cdot \hat y(X) -  L_n(X,\alpha)\cdot \hat y_\eta(X) 
\\
= \hat U_1(gX) - \hat U_1(X) + h_1(X)\cdot (X^n-1),
\end{multline}
where $g$ is a generator of $H$.
It then sends $[\hat U_1(X)]$, $[h_1(X)]$ together with $[T_\eta(\alpha, X)]$ to the verifier. 
The verifier samples another random challenge $\beta\sample F\setminus H$ and queries the oracles for
$\hat w(\beta), \hat y_A(\beta), \hat y_B(\beta), 
T_\eta(\alpha,\beta), \hat U_1(g\cdot \beta)$, $\hat U_1(\beta), h_1(\beta)$,
which are used for checking the identity \eqref{e:OuterSumcheckCoboundary} at $Z=\beta$.


\subsubsection{Inner sumcheck}
To  prove that  $T_\eta(\alpha,\beta)$ as provided by the oracle in fact stems from the circuit polynomials $M(X,Y)$, $M=A,B,C$ we adapt Marlin's inner sumcheck to our representation \eqref{e:MatrixRepresentationSumcheck}.
Using these we obtain
\begin{align}
\label{e:innerSumcheck}
T_\eta(\alpha, \beta)
=\sum_{w \in K }\sum_{M=A,B,C} \eta_M  \cdot \frac{val.row.col_M(w)}{(\alpha - row_M(w)) \cdot (\beta - col_M(w))},
\end{align}
where $(\eta_A,\eta_B,\eta_C)= \frac{(1-\alpha^n)\cdot (1-\beta^n)}{n^2}\cdot (1,\eta,\eta^2)$.
We apply the coboundary sumcheck to 
\[
p(X) = \sum_{M=A,B,C} \eta_M  \cdot \frac{val.row.col_M(X)}{(\alpha - row_M(X)) \cdot (\beta - col_M(X))}, 
\]
regarded as a reduced element from $F[X]/(X^m - 1)$.
The prover computes $U_2(X)$ from $F[X]/(X^m-1)$ satisfying
\begin{align*}
p(X) = \frac{T_\eta(\alpha,\beta)}{m} + U_2(g_K X) - U_2(X)  \mod (X^m-1),
\end{align*}
and then multiplies both sides with the denominator 
\begin{align*}
b(X) &= \prod_{M=A,B,C} (\alpha -row_M(X))\cdot (\beta - col_M(X))
\\
&=\prod_{M=A,B,C} \left(\alpha\beta + \beta \cdot row_M(X) + \alpha \cdot col_M(X) + row.col_M(X)\right),
\end{align*}
where $row.col_M(X)$ are the precomputed products \eqref{e:rowcol} from the prover key. 
This yields the \textit{inner sumcheck} identity
\begin{multline}
\label{e:InnerSumcheck}
\sum_{M=A,B,C} \eta_M \cdot val.row.col_M(X)
\\
= b(X) \cdot \left(\frac{T_\eta(\alpha,\beta)}{m} + U_2(g_K X) - U_2(X)\right) + h_2(X)\cdot (X^m -1 ),
\end{multline}
where $g_K$ is a generator of $K$ and $h_2(X) \in F^{< 3\cdot m - 3}[X]$.
The prover sends the oracles $[U_2(X)]$ and $[h_2(X)]$ to the verifier, who samples a random challenge $\gamma\sample F$, 
on which the oracles are queried for
$row_M(\gamma), col_M(\gamma), row.col_M(\gamma)$, $val.row.col_M(\gamma)$, where $M=A,B,C$,
and $U_2(g_K \cdot\gamma), U_2(\gamma), h_2(\gamma)$.
These values are used by the verifier to check the identity \eqref{e:InnerSumcheck} at $X=\gamma$.

\subsection{Security}

The security analysis of Coboundary Marlin is similar to that of our main recursive argument, Theorem \ref{thm:CompleteProtocol}.
As for Theorem \ref{thm:CompleteProtocol}, we stress the fact that we use the Halevi-Micali \cite{PoKHaleviMikali} notion of proof of knowledge with negligible knowledge error.
The proof can be found in the appendix, Section \ref{s:ProofCoboundaryMarlin}.
\begin{thm}
\label{thm:CoboundaryMarlin}
Instantiating the oracle by a computationally binding and perfectly hiding polynomial commitment scheme (Definition \ref{def:Hiding} and \ref{def:Binding}, Coboundary Marlin is a succinct, perfect honest verifier zero-knowledge (Defintion \ref{def:ZeroKnowledge}) argument of knowledge (Definition \ref{def:ArgumentOfKnowledge}) for the R1CS relation 
\[
\mathcal R =\big\{ ((A,B,C,x),w) : y=(x,w) \text{ satisfies } (A\cdot y)\odot(B\cdot y) = C\cdot y \big\}.
\]
\end{thm}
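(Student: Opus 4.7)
The plan is to follow the standard three-part structure for an argument of knowledge: completeness, knowledge soundness via rewinding and polynomial identity testing, and honest-verifier zero-knowledge via an explicit simulator. Throughout I would exploit the modular structure of the protocol, since both the outer and inner sumchecks are instances of Protocol \ref{p:coboundarySumcheck}, whose completeness, knowledge soundness and zero-knowledge properties have already been sketched in Section \ref{s:CoboundarySumcheck}.

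Completeness is a direct unfolding. Given an R1CS solution $y=(x\|w)$, the Lagrange-encoded polynomials $y(X),y_A(X),y_B(X)$ satisfy \eqref{e:QAPc}--\eqref{e:QAPb}; Lemma \ref{lem:coboundary} then produces the required coboundary polynomials $U_1(X),U_2(X)$, and the quotients $h_1(X),h_2(X)$ exist by polynomial division, so every verifier check passes. For knowledge soundness I would first invoke the argument-of-knowledge property of the polynomial commitment to extract, from any successful prover, the polynomials $\hat w(X),\hat y_A(X),\hat y_B(X),\hat U_1(X),h_1(X),U_2(X),h_2(X)$; the matrix encodings $val_M,row_M,col_M$ are bound by the indexer. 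I would then promote the verifier's scalar checks to polynomial identities via Schwartz--Zippel and rewinding. Rewinding over $\gamma$ upgrades the inner check to the polynomial identity \eqref{e:InnerSumcheck}; by Lemma \ref{lem:coboundary} and the representation \eqref{e:MatrixRepresentationSumcheck} this forces $T_\eta(\alpha,\beta)$ to coincide with the true evaluation of $A+\eta B+\eta^2 C$ at $(\alpha,\beta)$. Rewinding over $\beta$ upgrades the outer check to \eqref{e:OuterSumcheckCoboundary} as a polynomial identity in $X$, which by Lemma \ref{lem:coboundary} is equivalent to \eqref{e:OuterSumcheck}. Finally, rewinding over $\eta$ and $\alpha$ separates the three R1CS equations as polynomial identities modulo $X^n-1$. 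Decoding the Lagrange coordinates of $\hat w(X)\bmod(X^n-1)$ and combining with the public $x$ produces a vector $w$ satisfying \eqref{e:R1CS}. The whole extractor is organised in the Halevi--Micali style so that the knowledge error stays negligible.

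For honest-verifier zero-knowledge I would construct a simulator that samples the challenges $\eta,\alpha,\beta,\gamma$ according to the verifier's distribution and then samples the oracle query answers so that they are uniform in $F$ subject to the algebraic constraints the verifier checks. The perfect hiding of the commitment scheme lets the simulator commit to arbitrary polynomials of the correct degree, while the randomisation $\hat p(X)\sample p(X)+F\cdot(X^n-1)$ of $\hat w,\hat y_A,\hat y_B$ together with the coboundary randomisation $\hat U_1=U_1+(c_0+c_1X)(X^n-1)$ guarantees that the joint distribution of the answers is exactly that produced by an honest prover, by the argument already given for Protocol \ref{p:coboundarySumcheck}. The inner sumcheck is simulated analogously, exploiting that the indexer polynomials are public so their evaluations at $\gamma$ are determined. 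Composition of these per-round simulators is straightforward because each challenge is drawn only after the preceding commitments have been fixed.

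The main obstacle I anticipate is the bookkeeping in the soundness reduction. The protocol has three challenge rounds carrying the random values $\eta,\alpha,\beta,\gamma$ and several overlapping degree bounds, so the rewinding tree must be organised so that the extracted polynomials agree across rewound branches (which is precisely where binding is invoked uniformly) and so that the accumulated Schwartz--Zippel error, multiplied across layers, remains negligible while the extractor stays expected polynomial time. This is the delicate piece the proof in Section \ref{s:ProofCoboundaryMarlin} must handle, and it is the same pattern that will recur in Theorem \ref{thm:CompleteProtocol}.
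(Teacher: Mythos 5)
Your proposal follows essentially the same route as the paper: completeness by unfolding the coboundary identities, knowledge soundness by combining special soundness of the algebraic oracle proof (distinct challenges for $\eta,\alpha,\beta,\gamma$ via Vandermonde/interpolation arguments) with tree-based rewinding and the extractor of the opening/batch-evaluation argument under the binding assumption, and zero-knowledge via a simulator that samples challenges and query answers subject to the verifier's constraints and hides behind perfectly hiding commitments. The one place where you should be a bit more careful is the final paragraph: the paper obtains a \emph{strict} (not expected) polynomial-time extractor by combining the forking lemma (Lemma~\ref{lem:ForkingLemma}) with the conditional-probability bound of Lemma~\ref{lem:ConditionalProbability} and the extractor of the batch-evaluation argument from \cite{HaloInfinite}, and it uses $n$ distinct $\alpha$-challenges to invert the Lagrange-kernel reduction (Lemma~\ref{lem:LagrangeKernel}) rather than a simple Vandermonde separation --- details worth making explicit, but in line with what you already anticipate.
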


Using the Fiat-Shamir transform the interactive argument is transformed into a zk-SNARK with analog security properties   
in the random oracle model.

\section{A note on performance}
\label{s:NoteOnPerformance}
Marlin's outer sumcheck takes place over the FFT domain $H$, the size of which covers the number of constraints/variables of the  constraint system.
In practice circuits yield about the same number of variables as constraints, hence it is reasonable to take $n$ the number of constraints as measure for the computational effort of the outer sumcheck, assuming a sufficiently smooth order of $F^*$ to optimally match $n$.
The inner sumcheck runs over the FFT domain $K$ of size $m\approx \max_{M=A,B,C}\|M\|$ ($\|M\|$ is the number of non-zero entries in $M$), again under the assumption of sufficient smoothness.
This domain is by the factor
\[
d = \frac{\max_{M=A,B,C}\|M\|}{n}
\]
larger, where $d$ is the \textit{R1CS density} of the circuit. 
The R1CS density is the average number of variables per constraint.
In practice, we observed values between $d=1.5$ and $d=2$ for the circuits we target. 
(These circuits implement elliptic curve arithmetics over non-extension fields and the $x^5$-Poseidon hash \cite{Poseidon} with an internal state of $3$ field elements.)

\begin{table}[h!]
\caption{%
Computational effort of the (coboundary) zk-Marlin prover, using an elliptic curve based linear polynomial commitment scheme. 
We only count fast Fourier transforms $\textsf{FFT}(a)$ in terms of their domain size $a$, and elliptic curve multi scalar multiplications $\textsf{MSM}(b)$ in terms of the number of scalars $b$. 
(Without opening proof.)
}
\label{t:CoboundaryMarlin}
\vspace*{3mm}
\centering
\hspace*{-0.5cm}
\begin{tabular}{|l|c|c|}
\cline{2-3}
\multicolumn{1}{c|}{} &polynomial arithm. & commit
\\\hline
 intial round  &  $3 ~\textsf{FFT}(n)$   & $3~\textsf{MSM}(n)$
\\
%
%
\multirow{2}{*}{%
outer sumcheck}  
	&  $2 ~\textsf{FFT}(n) + 2~\textsf{FFT}(2n)$%
		& \multirow{2}{*}{$2 ~\textsf{MSM}(n) + 1~\textsf{MSM}(2n)$}
\\ & $+ 3~\textsf{FFT}(3n)$ &
\\
%
%
inner sumcheck  
	& $1~\textsf{FFT}(m) + 1~\textsf{FFT}(4m) $
		& $1~\textsf{MSM}(m) +1~\textsf{MSM}(3m)$ 

\\\hline
overall 
	 &  $\approx  (15 + 5\cdot d)~\textsf{FFT}(n)$ 
		& $\approx (7+4\cdot d)~\textsf{MSM(n)}$
\\\hline
\end{tabular}
\end{table}

\chapter{Recursion}
\label{s:Amortization}

Our recursive scheme is based on Coboundary Marlin and the \cite{Buenz} variant of the \textit{dlog polynomial commitment scheme} from \cite{BootleGroth}.
We take Coboundary Marlin without inner sumcheck as succinct argument, and we aggregate both the non-succinct parts of the opening proof verifier, as well as the correctness checks usually served by the inner sumchecks, which is verifying that the commitment intended for 
\[
T_\eta(\alpha,Y) = \sum_{M=A,B,C} \eta_M\cdot M (\alpha, Y)
\]
in fact carries these polynomials.
Aggregation of the non-succinct part of the dlog verifier (the \textit{dlog hard parts}) relies on the same principle as introduced by Halo \cite{Halo}.
The way we aggregate the inner sumchecks is a generalization of Halo's strategy for their circuit encoding polynomial $s(X,Y)$, and we  extend it across circuits to serve a reasonable number of instances $\mathcal C_i=\{A_i,B_i,C_i\}$ simultaneously.
As a separate `stand-alone' protocol, our strategy may be taken as \textit{public aggregation scheme} in the sense of \cite{HaloInfinite}, or an \textit{(atomic) accumulation scheme} according to \cite{Buenz, PrivateAggregationR1CS}.
However, for efficiency reasons we choose Halo's `interleaved' approach instead of the blackbox constructions from  \cite{Buenz, PrivateAggregationR1CS, HaloInfinite}, and let the rounds of both the argument system and the aggregation scheme share the same opening proof.

In our recursive argument certain previous proof elements  $(acc_i)_{i=1}^\ell$ called \textit{accumulators} are `passed' through inputs of the `current' circuit and post-processed within the run of the current argument.
Formally, $(acc_i)_{i=1}^\ell$ satisfy a given predicate $\phi$, 
\[
\phi(acc_i)=1, \quad i=1,\ldots,\ell,
\]  
and are mapped to dedicated inputs of the current circuit.
Beyond the rounds for proving satisfiability of the current circuit, the accumulators $(acc_i)_{i=1}^\ell$ are aggregated within some extra rounds into a new instance, the `current' accumulator $acc$, which is again subject to $\phi(acc)=1$.
Altogether our recursive argument is of the form
\[
\big\langle \prove ((acc_i)_{i=1}^\ell, (x,w), pk), \verify((acc_i)_{i=1}^\ell,x,vk) \big\rangle,
\]
where $(x,w)$ are public and private circuit witnesses, $pk$ and $vk$ are  the prover and verifier key for both Marlin and the aggregation scheme, and the new $acc$ is output to both prover and verifier.  

\section{Inner sumcheck aggregation} 
\label{s:InnerSumcheckAggregation}

Here, the accumulator consists of a commitment $C$ and the succinct description of the circuit polynomial $T_\eta(z,Y)$ intended to be represented by $C$, i.e. the point $z\in F$ and the randomnesses $\vec\eta=(\eta_A,\eta_B,\eta_C)\in F^3$, 
\[
acc_{T} = (z, \vec\eta, C).
\]
The corresponding predicate $\phi_T$ is satisfied if and only if $C$ is the commitment of $T_\eta(z,Y)$ (using commitment randomness zero).
The prover reduces the correctness of several accumulator instances to that of a single new one, and the verifier validates the correctness of this reduction while keeping track of the polynomial descriptions (i.e. the point $z$ and the coefficient vector $\vec\eta$)  by herself.
We sketch the strategy assuming a single previous accumulator.

There, a previous instance $(\alpha',\vec\eta',C')$ is `merged' with  $(\alpha,\vec\eta,C)$ of the current outer sumcheck.
In a first step, the prover reduces the  `multi-point', `multi-polynomial' instance\footnote{%
Here `multi-point' refers to the different points $\alpha$, $\alpha'$, and `multi-polynomial' to the different polynomials defined by $\vec\eta$, $\vec\eta'$.
}
$T_{\vec\eta'}(\alpha',Y)$, $T_{\vec\eta}(\alpha,Y)$ to a single-point, multi-polynomial instance 
\[
T_{\vec\eta'}(X,\beta), T_{\vec\eta}(X,\beta),
\] 
with random $\beta\sample F$, by providing the commitments to these new  polynomials and proving consistency via polynomial testing:
If the old polynomials evaluate at the challenge $\beta$ to the same values as the new polynomials at the old point, respectively, then correctness of the new polynomials overwhelmingly implies that of the old ones.
Using the same principle once again, correctness of the single-point multi-polynomial instance is then reduced in batch to a single-point single-polynomial instance 
\[
\lambda\cdot T_{\vec\eta'}(\alpha'',Y) + T_{\vec\eta}(\alpha'',Y) = T_{\lambda\cdot\vec\eta' + \vec\eta}(\alpha'',Y),
\] 
where $\lambda,\alpha''\sample F$ are random.
Note that the resulting polynomial is again of the form $T_{\vec\eta''}(\alpha'',Y)$ with $\eta''=\lambda\cdot\vec\eta' + \vec\eta$.
For the reduction, the prover shows that the linear combination $\lambda\cdot T_{\vec\eta'}(X,\beta) + T_{\vec\eta}(X,\beta)$ opens at the new challenge $X=\alpha''$ to the same value as the new polynomial  $\lambda\cdot T_{\vec\eta'}(\alpha'',Y) + T_{\vec\eta}(\alpha'',Y)$ at the old point $Y=\beta$.
Again, correctness of the new polynomial overwhelmingly implies correctness of the old ones.

Protocol \ref{p:InnerSumcheckAggregation} is regarded as a subprotocol of our complete recursive argument Protocol \ref{p:CompleteArgument}, right after the outer sumcheck.
We formulate it as an algebraic oracle protocol, considering commitments as oracles. 

\begin{protocol}[Inner sumcheck aggregation]
\label{p:InnerSumcheckAggregation}
Suppose that $acc_T'=(\alpha',\vec\eta',[T'(Y)])$ is a previous accumulator, intended to represent an oracle for $T'(Y)= T_{\vec\eta'}(\alpha',Y)$, and $(\alpha,\vec\eta, [T(Y)])$ is as provided by the prover in the current outer sumcheck, intended to represent an oracle for $T(Y)= T_{\vec\eta}(\alpha,Y)$, with $\vec\eta =(1,\eta,\eta^2)$.
Aggregation of $acc_T'$ and $(\alpha,\vec\eta,[T(Y)])$ is done according to the following steps immediately processed after the outer sumcheck.
\begin{enumerate}
\item
\label{p:InnerSumcheckAggregation:Step1}
Given $\beta$, the random challenge from the outer sumcheck, the prover sends the oracles for the `bridging polynomials'
\[
T_{\vec\eta}(X,\beta), T_{\vec\eta'}(X, \beta)\in F[X]/(X^n-1),
\]
on which the verifier responds with random $\lambda,\gamma\sample F$.

\item
\label{p:InnerSumcheckAggregation:Step2}
Given $\lambda,\gamma$ from the verifier, the prover `responds' with the oracle for
\[
T''(Y) = T_{\vec\eta}(\gamma, Y) + \lambda \cdot  T_{\vec\eta'}(\gamma, Y).
\]
\end{enumerate}
The verifier queries $[T_{\vec\eta}(X,\beta)]$, $[T_{\vec\eta'}(X,\beta)]$ for their corresponding values $v_1$, $v_2$ at $X=\alpha$ and $\alpha'$, and checks them against the values of $[T(Y)]$, $[T'(Y)]$ at $Y=\beta$, respectively.
It also queries $[T''(Y)]$ at $Y=\beta$ and checks its value against that of  the linear combination $[T_{\vec\eta}(X,\beta)]+\lambda [T_{\vec\eta'}(X,\beta)]$ at $X=\gamma$.
If these checks succeed, then the verifier accepts and the new accumulator is 
\[
acc_T''=(\alpha'',\vec\eta'', C'') = (\gamma, \vec\eta + \lambda\cdot \vec\eta', [T'(Y)]).
\]
\end{protocol}

A formal analysis of Protocol \ref{p:InnerSumcheckAggregation} is given in the course of the security proof of the complete recursive argument.
As a stand-alone argument having its own opening proof, the protocol defines a \textit{(perfectly) complete} and \textit{sound accumulation scheme} for the predicate $\phi_T$ in the sense of \cite{Buenz}: 
If both $acc_T'$ and $(\alpha,\eta,C)$ satisfy the predicate $\phi$, so does $acc_T''$.
And if $\phi(acc_T'')=1$, then with overwhelming probability both $\phi(acc_T')$ and $\phi(\alpha,\eta,C)=1$.

\section{Generalization to several circuits} 
\label{s:InnerSumcheckAggregationGeneral}

The aggregation strategy from Section \ref{s:InnerSumcheckAggregation} is easily extended to serve multiple circuits  $C_1,\ldots, \mathcal C_L$ simultaneously.
This `cross-circuit' generalization is especially useful in `non-homogeneous' chemes which are composed by a variety of recursive circuits.
Lets assume that the R1CS matrices $A_i,B_i,C_i$ of the circuits $\mathcal C_i$, $i=1,\ldots,L$, are padded to the same square dimension so that we may regard their
\[
A_i(X,Y), B_i(X,Y), C_i(X,Y),
\]
as bivariate polynomials over the same domain $H\times H$.
As in the single-circuit setting we leverage the linearity of the commitment scheme and keep track of a single \textit{cross-circuit polynomial}
\begin{equation}
\label{e:CrossCircuitT}
T_{H}(\alpha,Y)=\sum_{i=1}^L T_{i, \vec\eta_i}(z,Y)= \sum_{i=1}^L \sum_{M= A_i,B_i,C_i} \eta_{M,i}\cdot M(\alpha, Y)
\end{equation}
by means of the cross-circuit coefficient vector $H=(\vec\eta_1,\vec\eta_2, \ldots, \vec\eta_L)$.
The \textit{cross-circuit accumulator} for the collection $\mathcal C=\{\mathcal C_1,\ldots, C_L\}$ is of the form
\[
acc_\mathcal C= (\alpha, H, C),
\]
with $\alpha\in F$, coefficient vector $H=(\vec\eta_1,\ldots, \vec\eta_L)\in (F^{3})^L$, and an element $C$ from the commitment group.
The corresponding predicate $\phi_\mathcal C$ is satisfied if and only if $C$ is in fact the dlog commitment of $T_H(\alpha,Y)$, using blinding randomness zero.



\section{Accumulating the dlog hard parts}
\label{s:IPAAggregation}

The aggregation strategy for the non-succinct part of the dlog verifier is identical to that in \cite{Buenz}.
The opening proof for the dlog commitment is an inner product argument that uses the folding technique from \cite{BootleGroth} to gradually reduce the opening claim on the initial full-length polynomial to one of half the size, until ending up with the opening claim of a single coefficient polynomial. 
The final committer key $G_{f}$ of the opening proof is a single group element which is the result of a corresponding folding procedure on the full-length committer key of the dlog scheme.
It equals the commitment of the succinct \textit{reduction polynomial}
\begin{equation}
\label{e:BulletPolynomial}
h(\vec\xi, X) = \prod_{i=0}^{k-1} ( 1- \xi_{k-1-i} \cdot X^{2^i}),
\end{equation}
where $k=\log|H|=n$ is the number of reduction steps and $\vec\xi = (\xi_i)_{i=0}^{k-1}$ their challenges.
The \textit{dlog accumulator} is of the form
\[
acc_{dlog} = (\vec\xi, C),
\]
where $\vec\xi\in F^k$ and $C$ is from the commitment group, and the corresponding accumulator predicate $\phi_{dlog}$ is satisfied if and only if $C$ is the commitment of $h(\vec\xi, X)$, using blinding randomness zero.

As Protocol \ref{p:InnerSumcheckAggregation}, the aggregation strategy is regarded as a subprotocol of the complete recursive argument Protocol \ref{p:CompleteArgument}, and for efficiency reasons we reuse the challenge $\gamma$ from the inner sumcheck aggregation.
We again restrict to the case of a single previous accumulator.

\begin{protocol}[dlog hard parts aggregation]
\label{p:IPAAggregation}
Suppose that $acc_{dlog}'=(\vec\xi',[h'(X)])$ is a previous dlog accumulator, with $[h'(X)]$ representing an oracle for $h'(X)=h(\vec\xi',X)$.
The following step is part of the complete recursive argument and processed immediately after Protocol \ref{p:InnerSumcheckAggregation}:
\begin{enumerate}
\item
\label{p:InnerSumcheckAggregationStep1}
The verifier queries $[h'(X)]$ at for its value $v'$ at  $X=\gamma$ from step \eqref{p:InnerSumcheckAggregation:Step2} of Protocol \ref{p:InnerSumcheckAggregation}. 
\end{enumerate}
If $v'=h(\vec\xi', \gamma)$ then the verifier accepts.
The new accumulator $acc_{dlog}''=(\vec\xi'',C'')$ is the one from the dlog opening proof at the end of the complete protocol.
\end{protocol}


\section{The main recursive argument}
\label{s:FullProtocol}

The complete recursive argument is a composition of Coboundary Marlin's outer sumcheck for the `current' circuit, choosing `zero-knowledge bound' $b=1$, the aggregation rounds from the cross-circuit variant of Protocol \ref{p:InnerSumcheckAggregation}, and Protocol \ref{p:IPAAggregation}.
As in Section \ref{s:InnerSumcheckAggregationGeneral} we assume that the bivariate circuit polynomials $A_i(X,Y)$, $B_i(X,Y)$, $C_i(X,Y)$ are over the same domain $H\times H$, where $|H|=n$.
The query phases of these subprotocols are gathered at the end of the protocol, which is then concluded by the batch evaluation argument from \cite{HaloInfinite}.


We formulate the complete argument with oracles for polynomials replaced by their dlog commitments, while keeping with the same notation $[p(X)]$.
For simplicity, we again restrict to the case of a single previous accumulator. 
The general case is straight-forward.

\begin{protocol}[Complete recursive argument]
\label{p:CompleteArgument}
Given a composed accumulator $acc'=(acc_{\mathcal C}',acc_{dlog}')$, where $acc_\mathcal C=(\alpha', H',C_T')$ is a cross-circuit accumulator for the collection $\mathcal C=\{\mathcal C_1,\ldots,\mathcal C_L\}$ and $acc_{dlog}'=(\vec\xi, C')$ is a dlog accumulator.
The recursive argument for an instance $(x,w)$ of the `current' circuit $\mathcal C_k$ from $\mathcal C$ is composed by the following steps.
\begin{enumerate}
\item
\textit{Intitialization for $\mathcal C_k$}:
The prover computes the gauged witness polynomial $w(X)$, $z_A(X)$, and $z_B(X)$ from $F[X]/(X^n-1)$ and chooses random representants
\begin{align*}
\hat w(X), \hat z_A(X), \hat z_B(X) \in F^{<n+1}[X]
\end{align*}
as described in Section \ref{s:Marlin}.
It sends their dlog commitments $[\hat w(X)]$, $[\hat z_A(X)]$, and $[\hat z_B(X)]$ to the verifier, who responds with $\eta,\alpha\sample F$.

\item
\textit{Outer sumcheck for $\mathcal C_k$}:
The prover computes 
\[
T_{\vec\eta}(\alpha, X)= \eta_A\cdot A(\alpha,X) + \eta_B\cdot B(\alpha,X) + \eta_C\cdot C(\alpha,X) \in F[Y]/(X^n-1)
\]
of the current circuit, using $\vec\eta=(\eta_A,\eta_B,\eta_C)=(1,\eta,\eta^2)$, and
\[
\hat U_1(X)\in  F^{<n+2}[Y], h_1(X)\in F^{< 2\cdot n + 1}[X]
\]
subject to the outer sumcheck identity \eqref{e:OuterSumcheckCoboundary}.
It sends  $[T_{\vec\eta}(\alpha, X)]$, $[\hat U_1(X)]$, $[h_1(X)]$ to the verifier, who returns another random challenge $\beta\sample F$.

\item
\textit{Inner sumcheck aggregation, Step 1}:
The prover computes the `bridging' polynomials for  
\[
T_{\vec\eta}(X,\beta), T_{H'}(X,\beta)\in F[X]/(X^n-1),
\]
and sends $[T_{\vec\eta}(X,\beta)], [T_{H'}(X, \beta)]$ to the verifier, who answers with another random $\lambda,\gamma\sample F$.
\item
\textit{Inner sumcheck aggregation, Step 2}:
The prover computes the cross-circuit linear combination
\[
T_{H''}(\gamma,Y) = T_{\vec\eta}(\gamma, Y) + \lambda \cdot  T_{H'}(\gamma, Y)\in F[Y]/(Y^n-1),
\]
and $[T_{H''}(\gamma,Y)]$ to the verifier.
\end{enumerate}
After these steps, both prover and verifier engage in the batch evaluation argument from \cite{HaloInfinite} for the dlog commitment scheme, applied to the queries as listed below.
If the queried values pass the checks of the outer sumcheck, Protocol \ref{p:InnerSumcheckAggregation} and Protocol \ref{p:IPAAggregation}, and if $(acc_{\mathcal C}',acc_{dlog}')$ match with the public input $x$ of the circuit, then the verifier accepts.
The new accumulator is $acc'' =(acc_\mathcal C'', acc_{dlog}'')$ with\footnotemark 
\[
acc_\mathcal C''=(\gamma,H'', C'') = (\gamma,  \vec\eta\cdot\delta_k + \lambda \cdot H', [T_{H''}(\gamma, Y)]) ,
\] 
and $acc_{dlog}''=(\vec\xi,G_f)$ from the above batch evaluation proof. 
\footnotetext{%
Here, $\vec\eta\cdot\delta_k$ denotes the vector which is $\vec\eta$ at the position of the current circuit $\mathcal C_k$ in the cross-circuit accumulator, and zero elsewhere.
}
\end{protocol}

The multi-point queries to be proven by the batch evaluation argument are as follows.
\begin{itemize}
\item[-] 
$[\hat w(X)], [\hat z_A(X)], [\hat z_B(X)], [\hat U_1(X)], [h_1(X)]$, $[T_{\vec\eta}(\alpha, X)]$ at $\beta$, as 
well as $[\hat U_1(X)]$ at $g\cdot \beta$,
\item[-]
$[T_{\vec\eta}(X,\beta)]$ at $\alpha$, $[T_{H'}(X,\beta)]$ at $\alpha'$,
and $[T_{H''}(\gamma, Y)]$, $C_T'$ from $acc_\mathcal C'$ at  $\beta$,
\item[-]
$[T_{\vec\eta}(X,\beta)]+\lambda\cdot [T_{H'}(X,\beta)]$ at $\gamma$, and 
$C'$ from $acc_{dlog}'$ at $\gamma$.
\end{itemize}
For the sake of completeness we summarize the batch evaluation argument in Section \ref{s:MultiPointSinglePoint}.

\medskip
The following theorem states that the main recursive argument, i.e. Protocol \ref{p:CompleteArgument} extended by the predicate check on $acc''_{\mathcal C}$, is a zero-knowledge argument of knowledge.
We point out that we use the Halevi-Micali \cite{PoKHaleviMikali} notion of proof of knowledge for negligible soundness error, see Definition \ref{def:KnowledgeSoundness} and Defintion \ref{def:ArgumentOfKnowledge}.

\begin{thm}
\label{thm:CompleteProtocol}
If the dlog commitment scheme is computationally binding (Definition \ref{def:Binding}) then Protocol \ref{p:CompleteArgument}, extended by the predicate verification on the resulting inner sumcheck accumulator $acc_\mathcal C''$, is a perfectly honest verifier zero-knowledge (Definition \ref{def:ZeroKnowledge}) argument of knowledge (Definition \ref{def:ArgumentOfKnowledge}) for the relation 
\begin{multline*}
\mathcal R = \big\{((\mathcal C, acc_\mathcal C',acc_{dlog}', x),w) :  (x,w)\in R_{\mathcal C_k}
\wedge \phi(acc_{\mathcal C}')=1
\\
\wedge  \phi_{dlog}(acc_{dlog}')=1 
\wedge (acc_C',acc'_{dlog}) \text{ is consistent with }x
\big\},
\end{multline*}
where $\mathcal C=\{\mathcal C_1,\ldots, \mathcal C_L\}$ is a collection of rank-one constraint systems.
Here, $R_{\mathcal C_k}$ denotes the R1CS relation given by the circuit $\mathcal C_k$, and  $\phi$ and $\phi_{dlog}$ are as in Section \ref{s:InnerSumcheckAggregationGeneral} and Section \ref{s:IPAAggregation}
\end{thm}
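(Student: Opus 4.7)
The plan is to verify the three standard properties: perfect completeness, perfect honest-verifier zero-knowledge, and Halevi--Micali knowledge soundness. Perfect completeness is immediate: Lemma \ref{lem:coboundary} together with the choice of $\hat U_1$ and $h_1$ makes the outer sumcheck identity hold identically; the bridging polynomials of Step 3 are computed honestly so the polynomial-testing checks of Protocol \ref{p:InnerSumcheckAggregation} succeed; Protocol \ref{p:IPAAggregation}'s check holds by definition of $h(\vec\xi',X)$; and linearity of the dlog commitment makes $[T_{H''}(\gamma,Y)]$ satisfy the added predicate $\phi(acc_{\mathcal C}'')=1$. Perfect honest-verifier zero-knowledge follows the same recipe as for Coboundary Marlin: the simulator samples the verifier challenges $\eta,\alpha,\beta,\lambda,\gamma$ and then samples the queried evaluations uniformly on the affine relation enforced by the outer-sumcheck and aggregation identities, which is valid because $\hat w,\hat z_A,\hat z_B,\hat U_1$ carry independent blinding by multiples of $X^n-1$ exactly as exploited in the analysis of Protocol \ref{p:coboundarySumcheck}. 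Commitments are produced via the hiding property of the dlog scheme, and the concluding batch evaluation proof is simulated using the HVZK simulator of \cite{HaloInfinite}.

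For knowledge soundness I build an extractor that first invokes the extractor of the batch evaluation argument from \cite{HaloInfinite} on every commitment sent during the protocol, and by the assumed binding of the dlog scheme obtains, except with negligible probability, unique underlying polynomials $w^{*},z_A^{*},z_B^{*},U_1^{*},h_1^{*},T^{*},P^{*},Q^{*},T''^{*}$, where $P^{*}$ and $Q^{*}$ are the extracted bridging polynomials corresponding to $T_{\vec\eta}(X,\beta)$ and $T_{H'}(X,\beta)$, respectively. The instance-side hypotheses $\phi(acc_{\mathcal C}')=1$ and $\phi_{dlog}(acc_{dlog}')=1$ built into $\mathcal R$ identify the polynomials behind the previous accumulator commitments as $T_{H'}(\alpha',Y)$ and $h(\vec\xi',X)$. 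Setting $\hat y^{*}(Y)=x(Y)+(Y^\ell-1)\hat w^{*}(Y)$ and $\hat y_\eta^{*}(Y)=\hat z_A^{*}(Y)+\eta\hat z_B^{*}(Y)+\eta^{2}\hat z_A^{*}(Y)\hat z_B^{*}(Y)$, Schwartz--Zippel applied to the outer sumcheck check at the random $\beta\sample F\setminus H$ lifts it to the polynomial identity
\[
T^{*}(X)\cdot \hat y^{*}(X) - L_n(X,\alpha)\cdot \hat y_\eta^{*}(X) = U_1^{*}(gX) - U_1^{*}(X) + h_1^{*}(X)\cdot (X^n-1).
\]
Assuming $T^{*}(Y)=T_{\vec\eta}(\alpha,Y)$, Lemma \ref{lem:coboundary} yields that the corresponding sum over $H$ vanishes, and the identity $\langle L_n(X,\alpha),T_{\vec\eta}(X,z)\rangle_H = T_{\vec\eta}(\alpha,z)$ together with Schwartz--Zippel in the fresh $\alpha$ forces $(x\|w^{*})$ to satisfy the R1CS relation of $\mathcal C_k$, making $w^{*}$ a valid witness.

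The main obstacle is justifying the pending assumption $T^{*}(Y)=T_{\vec\eta}(\alpha,Y)$, and this is precisely where the added predicate check on $acc_{\mathcal C}''$ enters. It forces $T''^{*}(Y)=T_{H''}(\gamma,Y)$, and evaluating at $Y=\beta$ combined with the check of Step \eqref{p:InnerSumcheckAggregation:Step2} yields
\[
T_{\vec\eta}(\gamma,\beta)+\lambda\cdot T_{H'}(\gamma,\beta) = P^{*}(\gamma)+\lambda\cdot Q^{*}(\gamma).
\]
Since $\lambda\sample F$ is drawn after $P^{*}$ and $Q^{*}$ are committed, both sides are affine in $\lambda$ with $\lambda$-independent coefficients, and Schwartz--Zippel in $\lambda$ isolates the identities $P^{*}(\gamma)=T_{\vec\eta}(\gamma,\beta)$ and $Q^{*}(\gamma)=T_{H'}(\gamma,\beta)$; a further Schwartz--Zippel in the fresh $\gamma$ promotes the first to $P^{*}(X)=T_{\vec\eta}(X,\beta)$ as polynomials of degree $<n$ in $X$. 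Evaluating at $X=\alpha$ and combining with the check $P^{*}(\alpha)=T^{*}(\beta)$ gives $T^{*}(\beta)=T_{\vec\eta}(\alpha,\beta)$ at the random $\beta$; a final Schwartz--Zippel in $\beta$ delivers $T^{*}(Y)=T_{\vec\eta}(\alpha,Y)$ as required. The hardest part is the bookkeeping of these nested Schwartz--Zippel arguments: each random challenge must be demonstrably fresh against the polynomials it is applied to, and the overall knowledge error is then bounded by a sum of $\OO(n/|F|)$ terms plus the dlog binding-break probability, yielding the Halevi--Micali knowledge error.
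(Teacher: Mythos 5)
Your proof of zero-knowledge matches the paper's almost exactly (sample challenges, sample evaluations uniformly on the affine relation enforced by the outer-sumcheck identity, craft polynomials, use hiding and the batch-evaluation simulator), and your completeness sketch is correct. But your knowledge-soundness argument takes a genuinely different route. The paper first establishes $(m_1,\ldots,m_5)$-special soundness of the oracle proof with $(m_1,m_2,m_3,m_4,m_5)=(3,n,2d{+}1,2,d{+}1)$ — a full tree of accepting transcripts is assembled and the witness is reconstructed deterministically via Vandermonde inversion (for $\eta$ and $\lambda$), Lagrange-kernel inversion over $n$ values of $\alpha$ (Lemma \ref{lem:LagrangeKernel}), and degree arguments for $\beta,\gamma$. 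It then uses the forking lemma (Lemma \ref{lem:ForkingLemma}) together with the conditional-probability lemma (Lemma \ref{lem:ConditionalProbability}) to produce such a tree and to invoke the batch-evaluation extractor on each partial transcript at a guaranteed conditional success rate $\delta=\varepsilon^2/(8K)$. You instead extract once via the batch-evaluation extractor and then chain Schwartz--Zippel arguments over the fresh challenges $\beta,\lambda,\gamma,\alpha$. That route is valid in spirit and avoids the $K(\lambda)$ blow-up in the forking tree, but it must be justified with care: for each committed polynomial the distinguished property ``fails as a polynomial identity yet passes the verifier's check at the random challenge'' is bounded via Bayes by $O(d/|F|)/\varepsilon$, and this conditioning must be made explicit.

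Two concrete gaps remain. First, you invoke the batch-evaluation extractor ``on every commitment sent during the protocol'' without addressing that the extractor's guarantee is relative to the adversary's conditional success probability \emph{given} a fixed partial transcript up to the batch-evaluation phase. A random prefix conditioned on acceptance may still have negligible continuation probability; the paper's Lemma \ref{lem:ConditionalProbability} (or at minimum a Markov averaging step) is needed to find a prefix on which the extractor can be run in $\poly[\lambda]/\poly[\varepsilon]$ calls. Second, in the final step you pass from ``$T^*(Y)=T_{\vec\eta}(\alpha,Y)$ and the outer-sumcheck identity holds for all $X$'' directly to ``$(x\|w^*)$ satisfies the R1CS of $\mathcal C_k$.'' This skips the disentangling of the $\eta$-linearized lincheck into the three separate R1CS identities \eqref{e:QAPc}--\eqref{e:QAPb}: you still need either three distinct values of $\eta$ (Vandermonde inversion, which is what $m_1=3$ buys the paper) or an explicit Schwartz--Zippel argument in $\eta$ on the degree-$2$ coefficient polynomials. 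Both gaps are fixable, and with them filled your approach yields a correct and arguably more elementary proof than the forking-based one.
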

The proof of Theorem \ref{thm:CompleteProtocol} is given in Section \ref{s:ProofCompleteArgument}.
In practice we use the Fiat-Shamir transform to turn Protocol \ref{p:CompleteArgument} into a non-interactive argument of knowledge which is zero-knowledge against arbitrary polynomial time adversaries.

\section{A note on performance}
\label{s:Performance}

Inner sumcheck aggregation is particularly effective when the number of previous accumulators is low, as seen from the operations counts in Table \ref{t:InnerSumcheckAggregation}.
For a single previous accumulator ($\ell=1$) representing the case of linear recursion, the prover effort for the recursive argument is comparable to that of standard Marlin for a circuit of R1CS density $d=1$.
Having $\ell=4$ previous accumulators, as in our Darlin PCD scheme, the equivalent density is about $d=1.5$.

\begin{table}[h!]
\caption{%
Recursion prover with and without inner sumcheck aggregation in terms of FFT operations and multi-scalar multiplications, for in-degree $\ell$ (without opening proof).
}
\label{t:InnerSumcheckAggregation}
\vspace*{3mm}
\centering
\begin{tabular}{|l|c|c|}
\cline{2-3}
\multicolumn{1}{c|}{} & polynomial arith. & commit
\\\hline
 intial round  &  $3 ~\textsf{FFT}(n)$   & $3~\textsf{MSM}(n)$
\\
\multirow{2}{*}{outer sumcheck
}  
	&  $2 ~\textsf{FFT}(n) + 2~\textsf{FFT}(2n)$
		& \multirow{2}{*}{$2 ~\textsf{MSM}(n) + 1~\textsf{MSM}(2n)$}
\\
	& $+ 3~\textsf{FFT}(3n)$
		&
\\
aggregation rounds 
	& $(4+\ell)~\textsf{FFT}(n)$
		& $(2+\ell)~\textsf{MSM}(n)$ 

\\\hline
overall 
	 &  $\approx  (15+ \ell)\cdot ~\textsf{FFT}(n)$ 
		& $\approx (9+\ell)~\textsf{MSM}(n)$
\\\hline\hline
without aggregation
	 &  $\approx  (15 + 5\cdot d)~\textsf{FFT}(n)$ 
		& $\approx (7+4\cdot d)~\textsf{MSM}(n)$
\\\hline
\end{tabular}
\end{table}

Compared to a standard Marlin prover for circuits with density $d=2$ the performance improvement is estimated at $27\%$, as indicated by our estimates from Table \ref{t:DarlinVsMarlin} in Section \ref{s:Introduction}.
The timing estimates from this table are based on a detailed simulation of a Darlin prover (in terms of MSM, FFT, vector and vector-matrix operations), run on an Amazon EC2 G4dn instance (with 4 Intel Xeon@2.5 GHz and 1 NVIDIA T4) currently offered at a rate of $0.526$ USD per hour.
The number of constraints for verifying two previous proofs stem from detailed paper-and-pencil counts, where our circuit design follows the `deferred arithmetics' technique from \cite{Halo}, which postpones non-native arithmetic checks to the `next' circuit in recursion, in which these operations are again native. (We moreover apply their endomorphism-based scalar multiplication which reduces the number of constraints significantly.)
We vary over different segment sizes for the dlog commitment scheme (cf. Section \ref{s:Segmentation} on segmentation of homomorphic polynomial commitment schemes) starting with the smallest possible domain size $|H|=2^{19}$ to cover the two verifier, and then reducing the committer key to $2^{18}$ and $2^{17}$.
Consequently, the prover times decrease at the cost of increasing proof sizes and the number of constraints for the verifier circuit.


\chapter{Future work}

We will implement Darlin as the recursive main argument of our upcoming Darlin proof carrying data suite \cite{DarlinFullProtocol}, using a $2$-cycle of ordinary elliptic curves such as the Pasta curves \cite{PastaCurves}.
The full suite will cover pure proof merging nodes (for in-degree $1$ and $2$) as well as special purpose nodes with additional consensus specific logic. 
Beyond that a separate transformation chain of arguments for converting Darlin proofs into ordinary Marlin proofs will be provided.
A formal description, including an in-depth security analysis will be given in \cite{DarlinFullProtocol}.

\chapter{Acknowledgements}

The first author is indebted to Maus and Bowie for their appreciated feedback.
Without them, the main recursive argument would miss its most important feature, the whisker feedback loop in the cross-meal aggregation of fish, chicken and beef. 
One of the first readers is also grateful to Peperita, that helped moving away from pairings in exchange for tasty kibble.


\bibliographystyle{alpha}
\bibliography{bibfileSNARKs}

\appendix
\chapter{Appendix}
\label{s:Appendix}



\section{Notation}

We denote the security parameter by $\lambda$, where we throughout consider it in unary representation. 
A function $f(\lambda)$ is \textit{negligible} if for every polynomial $p(\lambda)$, it holds that $\lim_{\lambda\rightarrow\infty}f(\lambda)\cdot p(\lambda)= 0$, or in short $f(\lambda)=o(\nicefrac{1}{p(\lambda)})$. 

Probabilistic algorithms are denoted by capital letters $\mathsf A, \mathsf B$, etc., and we write $y\leftarrow \mathsf A(x)$ if an algorithm $\mathsf A$ ouputs a string $y$ given an input string $x$ while using some internal random coins $r$ uniformly sampled from $\{0,1\}^*$.
Whenever we need to refer to the used random coins $r$, we shall explicitly write $y=\mathsf A(x;r)$. 
We say that $\mathsf A$ is \textit{probabilistic polynomial time (p.p.t.)}, if its run time $T_{x,r}$ on input $x$ and internal random coins $r$ is bounded by some fixed polynomial $p(|x|)$ independent of the random coins, where $|x|$ denotes the length of its input.
We say that $\mathsf A$ is \textit{expected polynomial time} if the expected run time $E(T_{x,r})$, where the expectation is taken over all random coins $r$, is bounded by some polynomial in the length of the input.
The interaction of two interactive probabilistic algorithms $\mathsf A$ and $\mathsf B$ is denoted $\langle \mathsf A, \mathsf B\rangle$, where we explicitly clarify what are the inputs and outputs of both algorithms.


\section{Interactive arguments}
\label{s:ArgumentSystems}



Let $\mathcal R$ be a polynomial time decidable binary relation.
An interactive argument system for $\mathcal R$ consists of three probabilistic polynomial time algorithms 
\[
(\setup, \prove, \verify).
\]
Given the security parameter $\lambda$ in unary representation, $\setup(\lambda)$ outputs a common reference string $crs$ which supports all statement-witness pairs $(x,w)$ up to a certain maximum length $N=N(\lambda)$, which we write in short $(x,w)\in\mathcal R_N$.
Given $(x,w)\in\mathcal R_N$, the algorithms $\prove$ and $\verify$ are used to interactively reason about whether $x$ belongs to the language defined by $\mathcal R$ or not.
We denote their interaction by $tr\leftarrow\langle \prove(x,w), \verify(x)\rangle$ with $tr$ as the transcript of the interaction, and we assume that both algorithms have access to the $crs$ without explicitly declaring them as inputs.
After at most polynomially many steps the verifier accepts or rejects, and we say that $tr$ is accepting or rejecting.

%
%

\begin{defn}[Perfect completeness]
\label{def:Completeness}
An interactive argument system $(\setup$, $\prove,\verify)$ satisfies  \textit{perfect completeness} if
\begin{equation*}
\prob{
\begin{minipage}{3cm}
$\langle\prove(x,w),\verify(x)\rangle$ \text{ is accepting } 
\end{minipage}
\:\left|\: 
\begin{minipage}{3.2cm}
	$crs\leftarrow\setup(\lambda)$, 
	\\
	$(x,w)\leftarrow\mathcal A(\lambda)$, with $(x,w)\in\mathcal R_N$
\end{minipage}
\right.
} 
= 1.
\end{equation*}
\end{defn}



We define knowledge-soundness in the style of \cite{PoKHaleviMikali}.
However we do not dwell on the structure or the message distribution of the blackbox extractor. 
The reason for this choice of definition is the modularity of our proof of Theorem \ref{thm:CompleteProtocol}, which refers to the security result on the batch evaluation argument from \cite{HaloInfinite}.
\begin{defn}[Knowledge-soundness]
\label{def:KnowledgeSoundness}
An interactive argument system $(\setup$, $\prove,\verify)$ for the relation $\mathcal R$ is \textit{knowledge sound} if for every $x$ from $\mathcal L_\mathcal R$ and every adversary $\mathcal A$ which makes $\langle\mathcal A,\verify(x)\rangle$ accept with non-negligible probability $\varepsilon(x) >\negl[\lambda]$, there is a strict polynomial time algorithm $\mathcal E=\mathcal E^\mathcal A$ with blackbox access to $\mathcal A$ which does at most $\poly[|x|,\lambda]/\poly[\varepsilon]$ calls, and overwhelmingly outputs a witness $w$ such that  $(x,w)\in\mathcal R$.
\end{defn}

\begin{defn}
\label{def:ArgumentOfKnowledge}
We say that an interactive argument system $(\setup, \prove,\verify)$ is an \textit{argument of knowledge}, if it is perfectly complete and knowledge sound as defined above.
It is said to be \textit{succinct}, if the size of the transcript is sublinear in the size of $(x,w)\in\mathcal R$.
\end{defn}

\medskip
As we do not require any trust assumptions for the setup, our definition of zero-knowledge does not make use of trapdoors. 

\begin{defn}[Perfect honest verifier zero-knowledge]
\label{def:ZeroKnowledge}
An interactive argument system $(\setup, \prove, \verify)$  is \textit{perfect honest verifier zero-knowledge} if there is a p.p.t. algorithm $\simulate$ 
such that for every p.p.t. algorithm $\mathcal A$, 
\begin{multline*}
    \prob{
        \begin{minipage}{2cm}
            \centering
            $(x,w)  \in\mathcal R_N$ 
            \\
            $\wedge$
            \\
            $\mathcal \mathcal A(tr') = 1$ 
        \end{minipage}
        \:\left|\: 
        \begin{minipage}{2.7cm}     
        	$crs\leftarrow\setup(\lambda)$, 
	        \\
	        $(x,w)\leftarrow \mathcal A(crs)$,
	        \\
	        $tr'\leftarrow\simulate(crs,x)$
        \end{minipage}
        \right.
    }
    \\
    =
    \prob{
        \begin{minipage}{2cm}
            \centering
            $(x,w)  \in\mathcal R_N$ 
            \\
            $\wedge$
            \\
            $\mathcal A(tr) = 1$ 
        \end{minipage}
        \:\left|\: 
        \begin{minipage}{4cm}     
        	$crs\leftarrow\setup(\lambda)$, 
	        \\
	        $(x,w)\leftarrow \mathcal A(crs)$,
	        \\
	        $tr\leftarrow\langle\prove(x,w),\verify(x)\rangle$
        \end{minipage}
        \right.
    }.
\end{multline*}
\end{defn}

\section{Forking Lemmas}

We use the forking Lemma from \cite{BootleGroth} and we obtain strict polynomial time of the sampling algorithm by truncation. 
Assume that $(\setup$, $\prove,\verify)$ is an $(2r+1)$-move public-coin argument, by which we mean that in each round the verifier messages are chosen uniformly at random from a sample space $S$.
Given a transcript $tr$ resulting from the interaction of $\mathcal A(\:.\:)$ with the verifier $\verify(crs,x)$, we denote by $tr|_{\leq i}$, with $i=0,\ldots,r$, the partial transcript consisting of the prover and verifier messages of the first $2i+1$ moves. 
An \textit{$(n_1,\ldots,n_r)$-tree of accepting transcripts} is a tree of depth $r$ which is rooted in a prover's first message $tr|_{\leq 0}$ and in which each node at level  $i\in\{0,\ldots,r-1\}$ represents a partial transcript $tr|_{\leq i-1}$ and has exactly $n_i$ children nodes extending this transcript. 
The tree has overall $K(\lambda)=\prod_{i=1}^r n_i$ leafs standing for complete transcripts in which the verifier eventually accepts.
We assume that the size of $S$ grows superpolynomially in $\lambda$, so that 
\[
\prob{x_1\neq x_2  | x_1,x_2\sample S} > 1-\negl[\lambda].
\]
\begin{lem}[\cite{BootleGroth}]
\label{lem:ForkingLemma}
Let $(\setup,\prove,\verify)$ be a $(2r+1)$-move public-coin interactive proof, and $\mathcal A$ a p.p.t. adversary which runs in expected time $t_\mathcal A$ and succeeds $\langle\mathcal A(\:.\:),\verify(crs,x)\rangle$ with non-negligible probability $\varepsilon= \varepsilon(x)$ on public input $x$.  
If $n_1,\ldots,n_r\geq 2$ are such that $K(\lambda)=\prod_{i=1}^r n_i$ is polynomially bounded, then there exists a p.p.t. algorithm $\mathcal T$ that calls the next message function of $\mathcal A$ at most $2\cdot K(\lambda)/\varepsilon$ times and with non-negligible probability $\varepsilon/2$ outputs an $(n_1,\ldots,n_r)$-tree of accepting transcripts in which all pairs of sibling-node challenges $x_1,x_2$ are subject to $x_1\neq x_2$.
\end{lem}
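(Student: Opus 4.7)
The plan is to construct the tree-sampling algorithm $\mathcal T$ by combining a single initial run of $\mathcal A$ with a recursive rewinding schedule, then analyse its success probability via a \emph{heavy-prefix} (``heavy row'') averaging argument, and enforce the strict $2K(\lambda)/\varepsilon$ call budget by truncation.

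First I would describe $\mathcal T$ explicitly. It runs $\mathcal A$ against the honest verifier on input $x$, using fresh internal randomness and fresh verifier coins, until it obtains a first accepting transcript $tr^*$; this serves as the \emph{spine} of the eventual tree. Working downward from the topmost fork level, $\mathcal T$ rewinds to the partial transcript $tr^*|_{\leq i-1}$, resamples the $i$-th verifier challenge uniformly from $S$, and runs $\mathcal A$ to completion, repeating until $n_i$ pairwise-distinct challenges have each produced an accepting continuation. At every newly discovered sibling the procedure recurses downward to build the corresponding subtree with branching factors $n_{i+1},\dots,n_r$.

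Second, I would bound the expected cost. For a fixed prefix $tr^*|_{\leq i-1}$, write $\varepsilon_i$ for its conditional acceptance probability over all remaining verifier challenges and prover coins. A coupon-collector / geometric argument shows that collecting $n_i$ distinct accepting children takes expected $O(n_i / \varepsilon_i)$ rewindings, with the superpolynomial size of $S$ absorbing collision probabilities between freshly sampled challenges. Multiplying along the spine gives an overall expected cost of at most $O(K(\lambda))$ divided by the \emph{minimum} conditional success probability along the spine, which in general may be much smaller than the global $\varepsilon$.

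Bridging that gap is the principal difficulty, and it is handled via the standard heavy-prefix averaging argument. Call the spine \emph{heavy} if every prefix along it has conditional success probability at least $\varepsilon / (2K(\lambda))$, or a suitable per-level fraction of $\varepsilon$. An averaging/union-bound computation over the uniform choice of the spine shows that $tr^*$ lies on a heavy spine with probability at least $\varepsilon/2$; conditioned on this event, the expected recursive rewinding cost is $O(K(\lambda)/\varepsilon)$. Applying Markov's inequality to truncate $\mathcal T$ at $2K(\lambda)/\varepsilon$ calls loses at most another constant factor in success probability, which can be absorbed into the stated $\varepsilon/2$ bound, yielding both the strict call budget and the claimed non-negligible success probability. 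Sibling distinctness follows automatically from $|S|$ being superpolynomial in $\lambda$: $\mathcal T$ may discard and resample any colliding pair at only negligible overhead. The chief subtlety -- and the technical heart of the Bootle--Groth analysis -- is the joint calibration of the heaviness threshold and the truncation constant so that the three requirements ($2K(\lambda)/\varepsilon$ budget, $\varepsilon/2$ success, sibling-distinctness) are simultaneously met.
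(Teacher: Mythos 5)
Your construction diverges from the Bootle--Groth tree finder the paper relies on, and the divergence creates a real gap in the cost analysis. You bound the expected number of rewindings at a prefix by $O(n_i/\varepsilon_i)$ and then control the $\varepsilon_i$ only \emph{along the spine} via a heavy-prefix argument. But the tree has many non-spine nodes: every freshly sampled accepting child at level $i$ must itself be extended to a full subtree with branching $n_{i+1},\dots,n_r$, and the conditional success probabilities at those nodes are not controlled by the heaviness of the spine. Moreover, even restricted to the spine, a heaviness threshold of order $\varepsilon/(2K(\lambda))$ makes the per-node cost of order $n_i\cdot K(\lambda)/\varepsilon$, and these factors multiply across the $r$ levels; the conditional expected cost is therefore not $O(K(\lambda)/\varepsilon)$, so truncating at $2K(\lambda)/\varepsilon$ calls destroys the success probability rather than losing a constant factor. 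The ``joint calibration'' you defer to at the end is exactly where the argument, as set up, cannot deliver the stated budget of $2K(\lambda)/\varepsilon$ calls with success $\varepsilon/2$.

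The missing idea is that no heavy-row argument is needed at all. The \cite{BootleGroth} tree finder is a rejection sampler that is \emph{allowed to fail at the first completion of every node}: at each partial transcript it runs $\mathcal A$ once, aborts that branch if the completion rejects, and only if it accepts does it keep resampling challenges until the remaining $n_i-1$ accepting children (and, recursively, their subtrees) are found. The point is that the probability $p$ of entering the costly retry phase at a node cancels against the expected $1/p$ cost of each geometric retry, so the expected number of $\mathcal A$-calls for the subtree of a node is at most $n_i\cdot n_{i+1}\cdots n_r$ \emph{unconditionally}, i.e.\ at most $K(\lambda)$ overall, independent of $\varepsilon$ and of any conditional probabilities; and the algorithm succeeds (given unlimited time) with probability exactly $\varepsilon$, namely when the very first completion accepts. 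Markov's inequality then shows that truncating at $2K(\lambda)/\varepsilon$ calls loses at most $\varepsilon/2$, giving the claimed strict polynomial-time $\mathcal T$ with success probability at least $\varepsilon/2$; sibling collisions are excluded up to negligible probability because $|S|$ is superpolynomial. You should restructure the algorithm and its analysis around this cancellation rather than around heavy prefixes.
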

 


For the sake of completeness we shortly sketch the construction of $\mathcal T$ as claimed in Lemma \ref{lem:ForkingLemma}.
The tree finder algorithm $\mathcal T'$ from \cite{BootleGroth} is a rejection sampler which is allowed to fail at every first completion of $\langle\mathcal A,\verify(crs,x)\rangle$ on a given partial transcript $tr|_{\leq i}$.
It succeeds with probability 
\begin{align*}
p_{tr|_{\leq i}} &= \prob{ \mathcal T' ~\text{completes subtree for } tr|_{\leq i}} 
\\
&= \prob{\langle\mathcal A(\:.\:),\verify(crs, x)\rangle ~\text{succeeds on } tr|_{\leq i} }.
\end{align*}
and runs in expected polynomial time
\[
\expect{\#\text{ of }\mathcal A \text{ calls given } tr|_{\leq i}}  \leq n_i\cdot n_{i+1}\cdot\ldots\cdot n_r.
\]
Overall $\mathcal T'$ is of expected polynomial time calling  $\mathcal A$ at most $K(\lambda)$ times on average, and $\mathcal T'$ succeeds in producing a complete $(n_1,...,n_r)$-tree of accepting transcript with probability $\varepsilon$. 
The probability that such a complete tree of accepting transcripts has collisions (i.e. two sibling challenges coincide) is negligible, see the full version of \cite{BootleGroth}.
Finally limiting the run time of $\mathcal T'$ to $2\cdot K(\lambda)/\varepsilon$ calls of $\mathcal A$ (and returning $\bot$ in that case) yields a strict polynomial time algorithm  which still succeeds with a probability of at least $\varepsilon/2$.



\begin{lem}[\cite{HaloInfinite}]
\label{lem:ConditionalProbability}
Let $\delta$ be such that $0 < \delta \leq \frac{\varepsilon^2}{8 K(\lambda)}$.
Then with probability at least $\varepsilon/4$ the tree finding algorithm $\mathcal T$ from Lemma \ref{lem:ForkingLemma} outputs a tree of accepting transcripts with the following property: 
For every partial transcript $tr|_{\leq i}$ of length $i$ in the tree, the conditional success probability for $\mathcal A(\:.\:)$ continuing the partial transcript $tr|_{\leq i}$ is at least $\delta$.
\end{lem}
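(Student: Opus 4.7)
The plan is to combine Lemma \ref{lem:ForkingLemma} with a union bound over the internal nodes of the output tree. Call a partial transcript $\tau$ \emph{light} if its conditional acceptance probability $p(\tau)$ for $\langle \mathcal A(\cdot), \verify(x)\rangle$ is strictly below $\delta$, and \emph{heavy} otherwise. Since Lemma \ref{lem:ForkingLemma} already gives $\Pr[\mathcal T \neq \bot] \geq \varepsilon/2$, it suffices to show
\[
\Pr[\mathcal T \neq \bot \wedge T \text{ contains a light partial transcript}] \leq \varepsilon/4,
\]
from which the stated $\varepsilon/4$ bound on "tree exists and is all-heavy" follows by subtraction.

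The core step is a per-position bound: for every fixed internal-node position $(i,\text{pos})$ with $0\leq i\leq r-1$ in the $(n_1,\ldots,n_r)$-tree, the plan is to establish
\[
\Pr[\mathcal T \neq \bot \wedge \text{partial transcript at }(i,\text{pos})\text{ is light}] \leq \delta.
\]
I would prove this by unwinding the recursive rejection-sampling structure of $\mathcal T$ (equivalently of $\mathcal T'$, which dominates $\mathcal T$ in output distribution) at this node. Its partial transcript is generated by sampling a fresh verifier challenge $x$ extending the parent transcript and then running $\mathcal A$ forward; the step contributes to a successful tree only if the resulting completion is accepting and the subtree below is itself generated. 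Conditioning on the context outside the subtree rooted at $(i,\text{pos})$, the joint probability of obtaining a particular $\tau$ at this position and completing the tree factors so that $p(\tau)$ appears exactly once. Summing over light $\tau$ therefore bounds the left-hand side by $\delta$ times the overall tree-success probability, which is at most $1$.

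A union bound over internal-node positions, of which there are at most
\[
\sum_{i=0}^{r-1}\prod_{j=1}^{i} n_j \;\leq\; 2\, n_1\cdots n_{r-1} \;\leq\; K(\lambda)
\]
by the assumption $n_j\geq 2$ and the geometric-series bound, then gives
\[
\Pr[\mathcal T \neq \bot \wedge T \text{ has a light partial transcript}] \;\leq\; K(\lambda)\cdot\delta \;\leq\; \frac{\varepsilon^2}{8}\;\leq\; \frac{\varepsilon}{4},
\]
using $\delta \leq \varepsilon^2/(8K(\lambda))$ and $\varepsilon \leq 1$. Combined with $\Pr[\mathcal T \neq \bot] \geq \varepsilon/2$, this yields the claim. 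The main obstacle lies in the per-position inequality: the random choices of $\mathcal T$ are correlated across nodes, because a single initial $\mathcal A$-run populates an entire root-to-leaf path and recursive sub-samplings at sibling subtrees interact through the "first completion allowed to fail" abort rule. The technical heart is a careful factorization of $\mathcal T$'s joint distribution showing that, conditional on the context outside the subtree at $(i,\text{pos})$, the local sampling is equivalent to rejection sampling with success weight $p(\tau)$, for which the bound by $\delta$ is immediate from the definition of lightness.
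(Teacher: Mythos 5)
Your argument is correct in substance, but it is a genuinely different route from the paper's. The paper's proof never reasons about positions in the output tree at all: it uses the fact that the \emph{truncated} sampler $\mathcal T$ makes at most $2K(\lambda)/\varepsilon$ attempts, notes that an attempted partial transcript with conditional success probability below $\delta$ survives its (one-shot or retried) test with probability below $\delta$, and applies a union/Markov bound over the at most $2K(\lambda)/\varepsilon$ attempts to get $2K(\lambda)\delta/\varepsilon\leq\varepsilon/4$, which is then subtracted from the $\varepsilon/2$ success bound of Lemma \ref{lem:ForkingLemma}. You instead prove a per-position bound $\Pr[\mathcal T\neq\bot\wedge\text{node at a fixed position is light}]\leq\delta$ and union over tree positions; this requires the factorization you flag as the technical heart (for a retried position the node is sampled \emph{conditioned} on its subtree build succeeding, i.e.\ proportional to $\Pr[x]\cdot p_\tau/p_{\text{parent}}$, and the $p_{\text{parent}}$ cancels against the factor $p_{\text{parent}}$ occurring in the probability that the parent is in a completed tree), and your reduction to $\mathcal T'$ by output-domination correctly disposes of the truncation issue. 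What each approach buys: the paper's argument is essentially a one-liner given the explicit call bound of the truncated sampler and needs no analysis of the sampler's conditional distributions, whereas your argument is more delicate but avoids the $1/\varepsilon$ factor, so it actually establishes the conclusion under the weaker hypothesis $\delta\leq\varepsilon/(8K(\lambda))$ (i.e.\ a quadratically better trade-off). One small repair: the lemma (and its use in the proof of Theorem \ref{thm:CompleteProtocol}, where the leaves $tr|_{\leq 5}$ are exactly the transcripts on which the $\mathsf{Eval}$ extractor is run) concerns \emph{all} nodes of the tree, so your union bound must include the $K(\lambda)$ leaf positions as well as the internal ones; the same per-position bound holds there, and with at most $2K(\lambda)$ positions you still get $2K(\lambda)\delta\leq\varepsilon^2/4\leq\varepsilon/4$, so the constants survive.
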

\begin{proof}
The tree sampler $\mathcal T$ from Lemma \ref{lem:ForkingLemma} tests at most $2\cdot K(\lambda)/\varepsilon$ partial transcripts $tr|_{\leq i}$. 
Such transcript succeeds with the probability $p_{tr|_{\leq i}}$ as above.
Therefore, the probability that one of these $p_{tr|_{\leq i}}$ is smaller than a given $\delta$ is at most $2\cdot K(\lambda)/\varepsilon \cdot\delta$.
Choosing the latter of at most $\varepsilon/4$ yields the assertion of the lemma.
\end{proof}

We note that the factors $1/2$ in Lemma \ref{lem:ForkingLemma} and $1/8$ in Lemma \ref{lem:ConditionalProbability} are arbitrary. 
Any other choice of these factors $>1-1/\poly[\lambda]$ is possible.


\section{Proof of Theorem \ref{thm:CompleteProtocol}}
\label{s:ProofCompleteArgument}

Theorem \ref{thm:CompleteProtocol} is subject to Protocol \ref{p:CompleteArgument}, extended by the verification of the resulting $acc''=(\gamma,H'',[T_{H''}(\gamma,Y)])$. 
We refer to this extended protocol as the full protocol. 
The batch evaluation proof $\mathsf{Eval}$ is regarded as a subprotocol, and we use knowledge soundness and honest verifier zero-knowledge of it (Theorem \ref{thm:BatchEvaluation} and Theorem \ref{thm:dlog}) to infer the same properties for the full protocol.

\subsection{Knowledge soundness}

Assume maximum degree for the polynomial commitment scheme is $d=\poly[\lambda]$ with $d\geq 2\cdot n + b$.
The proof is divided into two steps. 
In the first one, we special soundness of the algebraic oracle proof.
The second step uses the transcript sampler from Lemma \ref{lem:ForkingLemma} to construct a strict polynomial time extractor from the strict polynomial time extractor of the batch evaluation argument.

\subsubsection{Step 1. Special soundness.} 
Consider the protocol as an interactive oracle proof where the oracles are guaranteed having a degree of at most $d=\poly[\lambda]$.
Besides the arithmetic checks on the evaluation claims, the verifier checks the oracle for $T_{H''}(\gamma,Y)$ by reading it in full length\footnote{%
Equivalently, the verifier may query the oracle at $d+1$ different points and reconstructs the polynomial from the values.
}
and compare it against the polynomial described by $H''$ as computed in the protocol.
We claim that this `algebraic oracle proof' is $(m_1, m_2, m_3, m_4)$-special sound, with
\[
(m_1,m_2,m_3,m_4,m_5)= (3, n, 2\cdot d + 1, 2, d+1),
\]
in the following sense: 
Given an $(m_1,m_2,m_3,m_4,m_5)$-tree of accepting transcripts with pairwise distinct verifier challenges for $\eta\in F,\alpha,\beta\in F\setminus H, \lambda\in F, \gamma\in F$, respectively, then
\begin{itemize}
\item
the polynomial in the oracle from $acc_{dlog}'$ is the claimed reduction polynomial $h(\vec\xi',X)$,
\item 
the oracles intended for $T_{H'}(\gamma',Y)$ and $T_{\eta}(\alpha,Y)$ in fact carry the correct polynomials, and
\item
the polynomial $y(X)= x(X)+(X^\ell -1)\cdot\hat w(X)\bmod (X^n-1)$ with $\hat w(X)$ from $[\hat w(X)]$ satisfies the R1CS identities \eqref{e:QAPc}, \eqref{e:QAPa}, \eqref{e:QAPb}.
\end{itemize}
This is true for the following reasons:
\begin{itemize}
\item
$m_1=3$ pairwise distinct $\eta_1,\eta_2,\eta_3\in F$ are sufficient to derive the R1CS identities \eqref{e:QAPc}, \eqref{e:QAPa}, \eqref{e:QAPb} from the `lincheck' identity \eqref{e:lincheck}.
(The Vandermonde matrix for different choices of $\eta$ is invertible.)

\item
$m_2= n$ pairwise distinct $\alpha_1,\ldots,\alpha_n\in F\setminus H$ allow inverting the reduction of the lincheck identity \eqref{e:lincheck} to the sumcheck identity \eqref{e:Sumcheck} by means of the Lagrange kernel.
(Recall that the sumcheck is obtained from the lincheck by applying $\langle L_n(X,\alpha), \,.\,\rangle$. 
By Lemma \ref{lem:LagrangeKernel} the inner products for any $n$ different values of $\alpha$ allow to uniquely reconstruct the lincheck polynomial modulo $(X^n-1)$.
)

\item
$m_3=2\cdot d+1$ pairwise distinct $\beta_1,\ldots,\beta_{2d+1}\in F\setminus H$ are sufficient 
to infer the  outer sumcheck identity on the full domain $F$, as well as the identity for the first step of the inner sumcheck aggregation.
(The polynomials are of degree at most $d$.)

\item
$m_4=2$ distinct $\lambda_1,\lambda_2\in F$ allow for reconstructing the component polynomials from their linear combination $T_{\vec\eta}(X,\beta) + \lambda\cdot T_{H'}(X,\beta)$ in the second step of the inner sumcheck aggregation.
(Again, since the Vandermonde matrix is invertible.)

\item
and $m_5=d+1$ pairwise distinct $\gamma_1,\ldots,\gamma_{d+1}\in F$ are sufficient to infer both the correctness of the polynomial behind the linear combination $[T_{\vec\eta}(X,\beta)]+\lambda\cdot [T_{H'}(X,\beta)]$, and the polynomial behind the oracle in $acc_{dlog}'$.
(Again, all polynomials are of degree at most $d$.)
\end{itemize}

\subsubsection{Step 2. Extractor.}
Suppose that $\mathcal A$ is a probabilistic polynomial time adversary which succeeds the complete verifier with non-negligible probability $\varepsilon$ on given inputs $(\mathcal C$, $acc_\mathcal C'$, $acc_{dlog}'$, $x)$. 
As $K(\lambda)=m_1\cdot\ldots\cdot m_5$ is polynomial in $\lambda$, Lemma \ref{lem:ForkingLemma} guarantees a strict polynomial time algorithm $\mathcal T$ which calls $\mathcal A$ at most $2\cdot K(\lambda)/\varepsilon = \poly[\lambda]$ times and succeeds with a non-negligible probability of $\varepsilon/2$ in sampling an $(m_1,m_2,m_3,m_4,m_5)$-tree of accepting transcripts as needed for Step 1.
Each partial transcript $tr|_{i\leq 5}$ records the messages until and including the sampling of the last verifier challenge $\gamma$, before entering the batch evaluation protocol $\mathsf{Eval}$. 
By Lemma \ref{lem:ConditionalProbability} we may assume that for each of these partial transcripts $tr|_{i\leq 5}$, the probability that $\mathcal A$ succeeds on it is at least $\delta= \frac{\varepsilon^2}{8\cdot K(\lambda)} > \negl[\lambda]$.
By knowledge-soundness of the batch evaluation argument $\mathsf{Eval}$, there is a strict polynomial time extractor which calls $\mathcal A$ at most $\poly[\lambda]/\poly[\delta]$ times on each of the transcripts $tr|_{i\leq 5}$ 
 and outputs the witness polynomials from $F^{<d+1}[X]$ (including commitment randomnesses) for 
\begin{itemize}
\item
$[\hat w(X)], [\hat z_A(X)], [\hat z_B(X)], [\hat U_1(X)], [h_1(X)]$, $[T_{\vec\eta}(\alpha, X)]$,
\item
$[T_{H'}(\alpha',X)]$ from $acc_\mathcal C'$, as well as $[T_{\vec\eta}(X,\beta)]$, $[T_{H'}(X,\beta)]$, 
\item
$[T_{\vec\eta}(X,\beta)]+\lambda\cdot [T_{H'}(X,\beta)]$, and 
$[h(\vec\xi',X)]$ from $acc_{dlog}'$,
\end{itemize}
of each of the $(\eta,\alpha,\beta,\lambda,\gamma)= (\eta_{i_1},\alpha_{i_2},\beta_{i_3},\lambda_{i_4},\gamma_{i_5})$ in the transcript tree.
These polynomials have values which pass all the verifier checks of the protocol.
Assuming the dlog commitment is computationally binding, the witness polynomials and commitment randomnesses for $[\hat w(X)]$, $[\hat z_A(X)]$, $[\hat z_B(X)]$, $[\hat U_1(X)]$, $[h_1(X)]$, as well as the witness polynomials for the accumulator commitments $[T_{H'}(X,\beta)]$, $[h(\vec\xi',X)]$ overwhelmingly coincide for all the verifier challenges $(\eta_{i_1},\alpha_{i_2},\beta_{i_3},\lambda_{i_4},\gamma_{i_5})$, and for the same reason the polynomials for $[T_{\vec\eta} (\alpha,X)]$ do not depend on $(\beta_{i_3},\lambda_{i_4},\gamma_{i_5})$.
Hence if we replace the commitments by these polynomials we obtain an $(m_1,m_2,m_3,m_4,m_5)$-tree of accepting transcripts as needed for Step 1 to conclude that  $y(X)=x(X)+(X^\ell-1)\cdot\hat w(X) \bmod (X^n-1)$ satisfies the R1CS identities for $\mathcal C_k$, and both $acc_T'$ and $acc_{dlog}'$ are correct. 
The overall run-time of the extractor is strictly bounded by $K(\lambda)\cdot \poly[\lambda]/\poly[\delta] = \poly[\lambda]/\poly[\varepsilon]$ and succeeds with a  non-negligible probability of at least $\varepsilon/4$.
By amplification we obtain the claimed extractor for knowledge-soundness.

\subsection{Zero-knowledge}

Perfect honest verifier zero-knowledge of Protocol \ref{p:CompleteArgument} is an immediate consequence of perfect honest verifier zero-knowledge of the batch evaluation argument (Theorem \ref{thm:BatchEvaluation}) and the same property for the coboundary outer sumcheck.
The latter is obtained from the following auxiliary lemma. 
The proof of it is straightforward, and we leave it to the reader.
\begin{lem}
\label{l:zk}
Assume that $P$ follows Protocol \ref{p:CompleteArgument}.
Then the conditional distribution of $(v_1,v_2,v_3,v_4,v_5,v_6)=(\hat w(\beta), \hat y_A(\beta),\hat y_B(\beta), \hat U(g\cdot \beta), \hat U(\beta),h(\beta))$, conditional to $(\eta, \alpha, \beta)$, is uniform on the relation $\mathcal R_{\eta,\alpha,\beta}$ of all $(v_1,v_2,v_3,v_4, v_5,v_6)$ satisfying the outer sumcheck equation
\begin{multline*}
    T_\eta(\alpha,\beta)\cdot (x(\beta)+(\beta^\ell-1)\cdot v_1) - L_n(\beta,\alpha)\cdot (v_2 +\eta\cdot v_3 + \eta^2\cdot v_2\cdot v_3) 
    \\
    - v_4 + v_5 
    = v_6\cdot (\beta^n-1).
\end{multline*}
\end{lem}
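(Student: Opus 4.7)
The lemma is a claim about the push-forward of the honest prover's random tape onto the six-tuple of verifier queries, and the plan is to identify that randomness and verify, by elementary linear algebra, that the induced map is an affine bijection onto the first five coordinates while the sixth is deterministically pinned down by the outer sumcheck identity. First I would isolate the independent uniform scalars injected by the prover before the verifier samples $\beta$: the three randomizers $r_w, r_A, r_B \in F$ in the degree-zero masks $\hat w = w + r_w\cdot (X^n-1)$, $\hat z_A = z_A + r_A\cdot (X^n-1)$, $\hat z_B = z_B + r_B\cdot (X^n-1)$, and the two randomizers $c_0, c_1 \in F$ in the degree-one mask $\hat U_1 = U_1 + (c_0 + c_1 X)(X^n-1)$. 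Conditional on $(\eta, \alpha, \beta)$, the deterministic polynomials $w, z_A, z_B, U_1, T_\eta(\alpha, X)$ and the quotient $h_1(X)$ are fixed functions of these five scalars, so all randomness in $(v_1, \ldots, v_6)$ comes from $(r_w, r_A, r_B, c_0, c_1) \in F^5$.

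Next I would exhibit the affine parametrization $(r_w, r_A, r_B, c_0, c_1) \mapsto (v_1, v_2, v_3, v_4, v_5)$ explicitly. Evaluating at $\beta$ gives $v_i = p_i(\beta) + r_i\cdot (\beta^n - 1)$ for $i = 1, 2, 3$ (with $p_i = w, z_A, z_B$), which is invertible in each $r_i$ because $\beta \notin H$ forces $\beta^n - 1 \neq 0$. Using $g^n = 1$ so that $(g\beta)^n - 1 = \beta^n - 1$, the pair $(v_4, v_5) = (\hat U_1(g\beta), \hat U_1(\beta))$ depends on $(c_0, c_1)$ through a Vandermonde-like block whose determinant equals $\beta(g-1)\cdot (\beta^n - 1)^2$; this is non-zero since $|H|>1$ forces $g \neq 1$ and we have $\beta \neq 0$. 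The complete map is therefore block-triangular with non-zero diagonal blocks, i.e.\ an affine bijection of $F^5$, so the push-forward of the uniform law on the random tape yields the uniform law on $(v_1, \ldots, v_5) \in F^5$.

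Finally, $h_1(X)$ is by construction the unique quotient polynomial making identity \eqref{e:OuterSumcheckCoboundary} hold, so evaluating at $X = \beta$ the value $v_6 = h_1(\beta)$ is uniquely determined by $(v_1, \ldots, v_5)$ via the outer sumcheck equation defining $\mathcal R_{\eta, \alpha, \beta}$, solvability again relying on $\beta^n - 1 \neq 0$. Hence $\mathcal R_{\eta, \alpha, \beta}$ is the graph of a function $F^5 \to F$, and the uniform law on $(v_1, \ldots, v_5)$ lifts along the section $(v_1,\ldots,v_5) \mapsto (v_1,\ldots,v_5, h_1(\beta))$ to the uniform law on $\mathcal R_{\eta, \alpha, \beta}$. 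The only mildly delicate point I expect is verifying the non-vanishing of the $(c_0, c_1)$-Jacobian, which needs $\beta \neq 0$; since the verifier samples $\beta \sample F \setminus H$, this is either built into the sample space by restricting to $F \setminus (H \cup \{0\})$ or absorbed as a negligible exceptional event and does not disturb the honest-verifier zero-knowledge conclusion.
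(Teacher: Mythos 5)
Your proof is correct and is exactly the straightforward argument the paper intends (it leaves this lemma to the reader): the five prover randomizers $(r_w,r_A,r_B,c_0,c_1)$ map affinely and bijectively onto $(v_1,\ldots,v_5)$ because $\beta^n-1\neq 0$ and the $(c_0,c_1)$-block has determinant $\beta(1-g)(\beta^n-1)^2\neq 0$, while $v_6=h_1(\beta)$ is the unique value completing the outer sumcheck identity evaluated at $X=\beta$. Your caveat about $\beta=0$ is a legitimate observation rather than a flaw in your argument: since $0\in F\setminus H$, at $\beta=0$ one has $v_4=v_5$ identically and the conditional distribution is not uniform on $\mathcal R_{\eta,\alpha,0}$, so the statement implicitly requires excluding $\beta=0$ (or treating it as a negligible event), exactly as you note.
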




\medskip
Using Lemma \ref{l:zk} the simulator for the outer sumcheck is constructed as follows.
Given a consistent previous accumulator $acc'=(acc'_{\mathcal C}, acc'_{dlog})$ and any circuit input $x\in F^\ell$, it first samples $\eta$ and $(\alpha,\beta)$ uniformly from $F$ and $(F\setminus H)^2$, respectively, and $(v_1, v_2, v_3,v_4,v_5,v_6)$ uniformly from $\mathcal R_{\eta,\alpha,\beta}$ (by choosing $v_1,\ldots, v_5$ uniformly from $F$ and $v_6$ as determined by the outer sumcheck equation), and then crafts arbitrary polynomials $\hat w(X)$, $\hat y_A(X)$, $\hat y_B(X)$, $\hat U(X)$ and $h(X)$ of degree less than $d$ which evaluate at $X=\beta$ (and $\hat U(X)$ also at $X=g\cdot\beta$) to the corresponding values.
All these polynomials are committed using hiding randomnesses, and the aggregation rounds Step (3) and Step (4) of Protocol \ref{p:CompleteArgument} are performed as in an honest prover-verifier interaction. 
Since the dlog commitment scheme is perfectly hiding, the resulting conversation is identically distributed as in an ordinary honest prover-verifier interaction for Step (1-4).
These conversations are completed by calling the simulator for the batch evaluation argument on the collected commitments.
Since the latter partial transcripts are identically distributed to an honest prover-verifier interaction, so are the completed conversations.


\section{Proof of Theorem \ref{thm:CoboundaryMarlin}}
\label{s:ProofCoboundaryMarlin}

The proof of Theorem \ref{thm:CoboundaryMarlin} is almost identical to that of Theorem \ref{thm:CompleteProtocol}, hence we only point out the differences. 
We assume that the indexer polynomial $row_M(X)$, $col_M(X)$, $row.col_M(X)$ and $val.row.col_M(X)$, $M=A,B,C$, as defined in Section \ref{s:Marlin} are verified against their commitments in a precomputation phase of the protocol.

\medskip
For knowledge-soundness, observe that Coboundary Marlin viewed as an algebraic oracle proof, is $(m_1$,$m_2$,$m_3$,$m_4)$-special sound with
\[
(m_1,m_2,m_3,m_4) = (3, n, 2\cdot d+1,4\cdot m - 3).
\]
Here, $d=\poly[\lambda]$ is the maximum degree of the oracle polynomials, $n$ and $m$ are the sizes of the domains $H$ and $K$, and $m_1,m_2,m_3,m_4$ correspond to the verifier challenges $\eta\in F$, $\alpha, \beta, \gamma \in F\setminus H$, respectively.
The reasoning for $m_1$, $m_2$, $m_3$ is as befor, and $m_2=4\cdot m - 2$ by the degree of the inner sumcheck identity \eqref{e:InnerSumcheck}.  
Based on these modified soundness numbers, the strict polynomial time extractor is constructed as in Section \ref{s:ProofCompleteArgument} by using the forking lemma (Lemma \ref{lem:ForkingLemma}) and the extractor for the batch evaluation argument.

\medskip
The proof for perfect honest verifier zero-knowledge can be taken over almost verbatim, replacing completing of the simulated transcript for the outer sumcheck by an honest prover-verifier run of the inner sumcheck instead of the aggregation rounds.

\section{Polynomial commitment schemes}
\label{s:PolynomialCommitments}

We regard a polynomial commitment scheme consisting of four probabilistic polynomial time algorithms 
\[
(\setup, \comm, \prove, \verify).
\]
Given the security parameter $\lambda$, $(ck,vk)\leftarrow\setup(\lambda)$ generates a common reference string consisting of a committer key $ck$ and a verifier key $vk$ supporting polynomials over some finite field $F$ having degree of at most $N=N(\lambda)$, where $N$ is polynomial in $\lambda$. 
Given the committer key $ck$, the commitment of a polynomial $p(X)$ of degree at most $N$ is computed by $C=\comm(p(X); r)$, where $r$ denotes the used random coins.
(We again omit $ck$ from the inputs for brevity.)
We regard $(\setup,\prove,\verify)$ as a succinct interactive argument system for the relation
\[
\mathcal R = \big\{
    ((C,x,v),(p(X),r)) ~:~ C=\comm(p(X);r) \wedge p(x) = v
\big\},
\]
and call the polynomial commitment scheme to satisfy completeness, zero-knowledge and witness-extended emulation if the interactive argument system does.
We refer to the interaction $\langle \open,\verify\rangle$ as opening proof for the polynomial $p(X)$ at the point $x\in F$. 

The security notions \textit{computational binding} and \textit{perfect hiding} are as for general non-interactive commitment schemes $(\setup,\comm)$. 
For the sake of brevity, we directly cite them applied to polynomial commitment schemes:

\begin{defn}[Perfect Hiding] 
\label{def:Hiding}
We say that a polynomial commitment scheme $(\setup,\comm,$ $\open,\verify)$ is \textit{perfectly hiding} if every polynomial adversary $\mathcal A$ has no advantage over pure guessing when distinguishing the commitments of two adversarially chosen polynomials,
\begin{equation*}
    \prob{
    \:
    b^\star = b 
    \:\left|\: 
    \begin{minipage}{6.8cm}
	    $(ck,vk)\leftarrow\setup(\lambda)$, 
	    \\
	    for $i=1,2$
	        \\
	        \hspace*{0.5cm}$p_i(X)\leftarrow\mathcal A(ck,vk)$, 
            $\deg (p_i(X))\leq N(\lambda)$,
	        \\
	        \hspace*{0.5cm}$C_i\leftarrow \comm(p_i(X))$
	    \\
	    $b\sample\{0,1\}$,
	    $b^*\leftarrow\mathcal A(C_b,C_{1-b})$
    \end{minipage}
\right.
} 
= \frac{1}{2}.
\end{equation*}
\end{defn}

\begin{defn}[Computational Binding] 
\label{def:Binding}
A polynomial commitment scheme $(\setup,\comm,$ $\open,\verify)$ is \textit{computationally binding} if for every p.p.t. adversary $\mathcal A$,
the probability to find two different messages $(p_i(X),r_i)$, $i=1,2$, having the same commitment is negligible:
\begin{multline*}
    \prob{
    \left.
        \begin{minipage}{5.3cm}
            \centering
            $\comm(p_1(X);r_1) = \comm(p_2(X);r_2)$
            \\
            $\wedge$
            \\
            $(p_1(X),r_1)\neq(p_2(X),r_2)$
        \end{minipage}
    \right| 
    \begin{minipage}{5.3cm}
	    $(ck,vk)\leftarrow\setup(\lambda)$, 
        \\
	    $(r_1,p_1(X),r_2,p_2(X))\leftarrow\mathcal A(ck,vk)$ 
    \end{minipage}
} 
\\
= \negl[\lambda].
\end{multline*}
\end{defn}

We further make use the notion of a homomorphic schemes, again directly applied to polynomial commitment schemes:
\begin{defn}[Homomorphic commitment]
A polynomial commitment scheme $(\setup$, $\comm,\open,\verify)$ with commitments in a \textit{commitment group} $(\mathcal G,+)$ is \textit{homomorphic} (or, \textit{linear}), if
\begin{equation*}
    \comm(p_1(X);r_1) + \comm(p_2(X);r_2) = \comm(p_1(X)+p_2(X);r_1+r_2).
\end{equation*}
\end{defn}

\section{The dlog commitment scheme from \cite{Buenz}}

The dlog polynomial commitment scheme from \cite{Buenz} is an ordinary Pedersen vector commitment. 
Given the coefficient vector $\vec c = (c_i)_{i=0}^{d-1}$ of a polynomial $p(X)$ from $F[X]$ of degree at most $d$, its commitment is the Pedersen linear combination
\begin{align*}
\comm(p(X);r) = r\cdot S + c_0\cdot G_0 + \ldots + c_{d-1}\cdot G_{d-1},
\end{align*}
where $S$ and $(G_i)$ is the committer key, and the optional hiding randomness $r$ is uniformly drawn from $F$.
As in \cite{BootleGroth}, \cite{Bulletproofs}, or \cite{Wahby}, 
the opening proof is an inner product argument, which uses $k=\log(d)$ rounds to gradually reduce the size polynomial in question by one half until ending up with a single-coefficient instance. 
In the course of the reduction the committer key is repeatedly folded into a final committer key $G_f$ for the single-coefficient claim, which depends linearly on the initial committer key by
\[
G_f = h_0\cdot G_0 + \ldots + h_{d-1}\cdot G_{d-1},
\]
where $(h_i)$ are the coefficients of the \textit{reduction polynomial} 
\begin{equation*}
h(\vec\xi, X) = \prod_{i=0}^{k-1} ( 1- \xi_{k-1-i} \cdot X^{2^i}),
\end{equation*}
where $\vec\xi=(\xi_i)_{i=0}^{k-1}$ are the random challenges of the reduction steps.
See \cite{Buenz} for a detailed description.

\begin{thm}[\cite{BootleGroth, Buenz}]
\label{thm:dlog}
Under the assumption that the dlog commitment scheme is computationally binding, the  opening argument
is an honest verifier zero-knowledge (Definition \ref{def:ZeroKnowledge}) argument of knowledge (Definition \ref{def:ArgumentOfKnowledge}).
\end{thm}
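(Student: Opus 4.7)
The plan is to carry out the standard analysis of the inner-product / folding argument from \cite{BootleGroth, Buenz}, adapted to the Halevi--Micali style knowledge soundness (Definition \ref{def:KnowledgeSoundness}) via the forking machinery of Lemma \ref{lem:ForkingLemma} and Lemma \ref{lem:ConditionalProbability}. Perfect completeness is a direct verification of the folding identities after each reduction step. For the remaining two properties I would proceed as follows.

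For knowledge soundness, the opening argument is a $(2k+1)$-move public-coin protocol with $k = \log d$ reduction rounds and challenges $\xi_0,\ldots,\xi_{k-1}$. First I would establish a special-soundness statement: from any tree of accepting transcripts with parameters $(n_1,\ldots,n_k) = (3,3,\ldots,3)$ whose sibling challenges are pairwise distinct, one can recursively recover a polynomial $p(X)$ of degree $<d$ and opening randomness $r$ such that the root commitment equals $\comm(p(X); r)$ and $p$ evaluates to the claimed value. The recursion step is the usual linear-algebra argument: three accepting sibling transcripts at a node with pairwise distinct challenges $\xi^{(1)}, \xi^{(2)}, \xi^{(3)}$ yield enough equations on the auxiliary group elements $L, R$ and on the two half-length coefficient vectors to solve uniquely for the full-length vector underneath; computational binding of the dlog scheme then forces the recovered vector to be consistent across the tree. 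Since $d = \poly[\lambda]$ we have $K(\lambda) = 3^k = d^{\log_2 3} = \poly[\lambda]$, so Lemma \ref{lem:ForkingLemma} produces a strict-polynomial-time tree sampler succeeding with non-negligible probability, and feeding this into the recursive extractor from the special-soundness step gives the blackbox extractor required by Definition \ref{def:KnowledgeSoundness}.

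For perfect honest verifier zero-knowledge, the simulator exploits the hiding randomness attached to the initial Pedersen commitment and the auxiliary blinders that the \cite{Buenz} variant carries through each folding step. I would construct it by sampling the verifier challenges $\xi_0,\ldots,\xi_{k-1}$ uniformly, then drawing the final-round folded coefficient and blinder uniformly from $F$, and working backwards through the reduction to determine the group elements $L_i, R_i$ consistent with the pre-fixed folded values under the chosen challenges. Perfect hiding of the dlog commitment (Definition \ref{def:Hiding}) guarantees that in an honest execution the marginal distribution of each $L_i, R_i$ conditional on the challenges and the final state is uniform, matching the simulator's output exactly; this gives perfect (not merely computational) indistinguishability of Definition \ref{def:ZeroKnowledge}.

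The hard part is the extraction step, which is the classical technical core of \cite{BootleGroth}: verifying that three sibling transcripts really suffice to invert the folding at a single level and that the witness extracted at one branch agrees with the witnesses extracted at the others under computational binding, while ensuring that the overall recursion remains polynomial-time as $k$ grows. Once this routine but delicate linear algebra is in place, the forking-lemma packaging and the zero-knowledge simulator are straightforward adaptations of the arguments already employed in Section \ref{s:ProofCompleteArgument}, and a detailed treatment can be deferred to \cite{BootleGroth, Buenz}.
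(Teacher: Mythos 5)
Your proposal is correct and matches the paper's approach, which is likewise a pointer to the literature: the paper's remark under Theorem~\ref{thm:dlog} defers the knowledge extractor to \cite{BootleGroth} (with the same truncate-to-strict-polynomial-time device that underlies Lemma~\ref{lem:ForkingLemma}) and the zero-knowledge simulator to \cite{Buenz}. You simply spell out the special-soundness recursion and the backward-sampled simulator in more detail than the paper does, while deferring the technical core to the same references.
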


Zero-knowledge is proven in \cite{Buenz}.
To obtain a strict polynomial time extractor as demanded by Definition \ref{def:ArgumentOfKnowledge}, one proceeds as in \cite{BootleGroth} to construct an expected polynomial time extractor with average runtime $\mu$ inverse proportional to the success probability $\varepsilon$ of the adversary. 
Truncating it to strict polynomial time $2\cdot \mu/\varepsilon$ zields a running time bounded by $\poly[\lambda]/\poly[\varepsilon]$ calls of $\mathcal A$ and still maintains a success probability of $\varepsilon/2$.   


\section{The batch evaluation protocol from \cite{HaloInfinite}}
\label{s:MultiPointSinglePoint}

We give an informal description of the protocol from \cite{HaloInfinite}, Section $5.2$, for proving a multi-point opening claim
\[
p_i(x_i)=y_i, \quad i=1,\ldots,m,
\]
of given polynomials $p_i(X)$, $i=1,\ldots,m$, for a linear commitment scheme $(\setup, \comm, \open,\verify)$.
The multi-point evaluation claim is equivalent to the following system of algebraic identities over the set $\Omega=\{x_1,\ldots, x_m\}$, 
\[
p_i(X)\cdot L_{m}(x_i,X) - y_i = 0 \mod z(X),\quad i=1,\ldots,m,
\]
where $L_{m}(x_i,X)$ is the Lagrange polynomial for $\Omega$, and $z(X)=\prod_{x\in\Omega} (X-x)$ is the vanishing polynomial of $\Omega$.
A more practical system of identities is achieved if we replace the Lagrange polynomials by their non-normalized variant 
\[
z_i(X) = \frac{z(X)}{X-x_i}= \prod_{x\in\Omega\setminus\{x_i\}}(X-x).
\]
These $m$ identities are reduced to a single identity over $\Omega$ in the usual manner:
The prover receives a random scalar $\rho\sample F$ from the verifier, and shows the combined identity
\[
\sum_{i=1}^m \rho^{i-1} \cdot \left( p_i(X) - y_i\right) \cdot z_i(X)
    = 0 \mod z(X)
\]
instead.
For this, the verifier provides an oracle for the quotient polynomial $q(X)$ in the non-modular identity
\[
\sum_{i=1}^m \rho^{i-1} \cdot \left( p_i(X) - y_i\right)\cdot z_i(X) = q(X)\cdot z(X),
\]
which is probed at a fresh random point $x\sample F$.
This identity is verified by showing that the linear combination $\sum_{i=1}^m \rho^{i-1} \cdot z_i(x) \cdot  p_i(X) -  z(x)\cdot q(X)$ opens at $x$ to the expected value $v = \sum_{i=1}^{m} \rho^{i-1}\cdot y_i\cdot z_i(x)$.

\begin{thm}[\cite{HaloInfinite}, Theorem 6]
\label{thm:BatchEvaluation}
Assume that the commitment scheme is computationally binding.
If the opening argument for the polynomial commitment scheme is a perfect honest verifier zero-knowledge argument of knowledge (Definition \ref{def:ArgumentOfKnowledge}), then the same holds for the batch evaluation protocol.
\end{thm}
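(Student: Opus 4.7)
The plan is to inherit each of perfect completeness, perfect honest-verifier zero-knowledge, and knowledge soundness from the corresponding property of the inner single-point opening argument, together with computational binding and linearity of the commitment scheme. Perfect completeness is immediate: if $p_i(x_i) = y_i$ for all $i$, then $\sum_{i} \rho^{i-1}(p_i(X) - y_i)\, z_i(X)$ vanishes on $\Omega = \{x_1,\ldots,x_m\}$ for every $\rho$, hence is divisible by $z(X)$, so the honest $q(X)$ is a legitimate polynomial, and at any $x$ the linear combination opens to $v = \sum_i \rho^{i-1} y_i\, z_i(x)$, which is accepted by completeness of the inner argument. For perfect HVZK, the simulator first samples $\rho, x \sample F$, computes $v$ directly from $(\rho, x, y_i)$, produces $C_q$ as a commitment to an arbitrary (say zero) polynomial with fresh hiding randomness -- identically distributed to the honest $C_q$ by perfect hiding of the commitment scheme -- and then invokes the HVZK simulator of the inner opening argument on the linear-combination commitment $\sum_i \rho^{i-1} z_i(x) C_i - z(x) C_q$ at $x$ with claimed value $v$.

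The core of the proof is knowledge soundness. Given an adversary $\mathcal A$ succeeding with non-negligible probability $\varepsilon$ on fixed commitments $(C_i)$ and claims $(x_i, y_i)$, I would apply the forking sampler of Lemma~\ref{lem:ForkingLemma} combined with Lemma~\ref{lem:ConditionalProbability} to the two public-coin challenges $(\rho, x)$, obtaining in strict polynomial time a tree of accepting transcripts wide enough to support both a Vandermonde inversion in the $\rho$-direction (at least $m+1$ sibling challenges) and a polynomial identity in $X$ (at least $d+1$ sibling challenges, where $d$ is the commitment degree bound). On each leaf we call the strict polynomial time extractor of the inner opening argument -- whose conditional success probability is non-negligible by Lemma~\ref{lem:ConditionalProbability} -- to recover a polynomial $P_{\rho,x}(X)$ together with commitment randomness witnessing both $\comm(P_{\rho, x}) = \sum_i \rho^{i-1} z_i(x) C_i - z(x) C_q$ and $P_{\rho, x}(x) = v_{\rho,x}$.

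From these tree-wide witnesses, linearity of the commitment and computational binding allow us to isolate fixed polynomials $\tilde p_i(X)$ and $\tilde q(X)$ underlying the individual $C_i$ and $C_q$: inverting the (scaled) Vandermonde system in $\rho$ yields linear combinations of the $P_{\rho,x}$'s whose commitments equal the individual $C_i$ (and, after exploiting variability in $x$, also $C_q$), while binding rules out inconsistent extractions across different branches. Once $\tilde p_i, \tilde q$ are fixed, interpolating $d+1$ siblings in the $x$-direction promotes the scalar equations $P_{\rho,x}(x) = v_{\rho,x}$ to the polynomial identity $\sum_i \rho^{i-1}(\tilde p_i(X) - y_i)\, z_i(X) = z(X)\, \tilde q(X)$, and a second Vandermonde inversion in $\rho$ separates it into $(\tilde p_i(X) - y_i)\, z_i(X) \equiv 0 \bmod z(X)$ for each $i$, which forces $\tilde p_i(x_i) = y_i$ since $z_i(x_i) \ne 0$. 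The delicate points are controlling the union of extraction failure probabilities over the polynomially many leaves of the tree while keeping the overall run-time strictly polynomial in $1/\varepsilon$, and setting up the Vandermonde systems carefully so that binding genuinely pins down unique witness polynomials -- in particular, separating the contribution of $C_q$ from that of the $C_i$'s requires combining variations in both $\rho$ and $x$. This closely parallels the knowledge-soundness analysis in Section~\ref{s:ProofCompleteArgument} for Protocol~\ref{p:CompleteArgument}.
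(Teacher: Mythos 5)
The theorem you are asked to reconstruct is cited in the paper as \cite{HaloInfinite}, Theorem~6, and the paper does \emph{not} contain its own proof of it; Section~\ref{s:MultiPointSinglePoint} only describes the protocol and then invokes the external result. So there is no ``paper proof'' to compare against; I can only assess your reconstruction on its own merits and against the companion proof of Theorem~\ref{thm:CompleteProtocol}, which does exhibit the paper's intended style.

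Your overall strategy --- completeness by divisibility, HVZK by simulating $C_q$ with a fresh hiding commitment and calling the inner simulator, and knowledge soundness via the forking sampler of Lemma~\ref{lem:ForkingLemma}/\ref{lem:ConditionalProbability} followed by the inner opening extractor on each partial transcript --- matches the template used in Section~\ref{s:ProofCompleteArgument} and is correct in outline. One place where your sketch is imprecise and deserves care: $C_q$ is sent by the prover \emph{after} receiving $\rho$, so for each sibling $\rho$ in the tree the underlying polynomial is potentially a different $\tilde q_\rho(X)$, and there is no single $\tilde q$ to recover by a $\rho$-Vandermonde. Consequently the separation of $C_q^{(\rho)}$ from the $C_i$'s cannot be achieved by ``combining variations in $\rho$ and $x$'' in a joint system, as you suggest; it must be done \emph{per fixed $\rho$} using $m+1$ distinct $x$-siblings and the fact that the evaluation matrix of the basis $z_1(X),\ldots,z_m(X),z(X)$ of $F^{\leq m}[X]$ at $m+1$ distinct points is invertible. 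Once the $\tilde p_i$ (consistent across $\rho$-branches by binding) and the per-branch $\tilde q_\rho$ are pinned down, you promote each branch's scalar equations to the polynomial identity $\sum_i \rho^{i-1}z_i(X)(\tilde p_i(X)-y_i)=z(X)\,\tilde q_\rho(X)$ via $d+1$ $x$-siblings, and only then does the $\rho$-Vandermonde (with $m$ distinct $\rho$'s, not $m+1$) decouple the $m$ residue conditions $z_i(X)(\tilde p_i(X)-y_i)\equiv 0\pmod{z(X)}$, from which $\tilde p_i(x_i)=y_i$ follows since $z_i(x_i)\neq 0$. With this reordering clarified, your argument is sound, but as written the roles of the two Vandermonde inversions and the dependence of $C_q$ on $\rho$ are blurred enough that the extraction, as stated, does not literally go through.
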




\section{Segmentation of linear commitment schemes}
\label{s:Segmentation}

\textit{Segmentation} of a homomorphic polynomial commitment scheme is a useful technique to improve the computational effort of the prover at the cost of increasing the commitment size\footnotemark.
\footnotetext{%
We learned this technique from \cite{Mina} but believe that it is commonly known.
}
One chooses an undersized committer key $ck=(G_0,\ldots,G_{s-1})$, where the \textit{segment size} $s$ is typically magnitudes smaller than the targeted maximum degree $d$, and extends its domain beyond degree $s-1$ by decomposing a polynomial $p(X)$ into
\[
p(X) = p_0(X) + X^s\cdot p_1(X) + \ldots + X^{(k-1)\cdot s} \cdot p_k(X), 
\]
with each $p_i(X)$, $i=1,\ldots, k$, of degree at most $s$. 
(If $d$ is the degree of $p(X)$ then the number $k$ of segment polynomials is equal to $\ceil{(d+1)/s}$.) 
The commitment of $p(X)$ is then defined as the vector of the commitments of its segment polynomials using $ck$, 
\[
\comm(p(X)) := \left(\comm(p_0(X)),\ldots, \comm(p_k(X))\right).
\]
Every evaluation claim $p(x)=v$ of the full-size polynomial is translated to the same claim on the linear combination of its segment polynomials,
\[
LC_x(p_0(X),\ldots,p_k(X)) = 
p_0(X) + x^s\cdot p_1(X) + \ldots + x^{(k-1)\cdot s} \cdot p_k(X),
\]
which is efficiently proven by leveraging the homomorphic property of the scheme.


\section{Facts on the Lagrange kernel}
\label{s:LagrangeKernel}
Let $H=\{x: x^n-1=0\}$ be an order $n$ subgroup of the multiplicative group of a finite field $F$.
The Lagrange kernel 
\[
L_n(X,Y) =\frac{1}{n}\cdot \left( 1+ \sum_{i=1}^{n-1} X^i\cdot Y^{n-i}\right) 
\]
is the unique bivariate symmetric polynomial of individual degree at most $n-1$, such that for $y\in H$ the function $L(X,y)$ restricted to $H$ equals the Lagrange function $L_y(X)$, which evaluates to one at $X=y$, and zero otherwise.
The  kernel has the succinct representation
\[
L_n(X, Y)=\frac{1}{n}\cdot \frac{Y\cdot Z_H(X) - X\cdot Z_H(Y)}{ X- Y},
\]
where $Z_H(X)=X^n-1$ is the vanishing polynomial of $H$.
Lagrange kernels represent point evaluation, as characterized by the following simple Lemma.

\begin{lem}
\label{lem:LagrangeKernel}
Suppose that $H\subset F^*$ is a multiplicative subgroup of order $n$ and $p(X)$ is a polynomial of degree $deg(p(X))\leq n-1$. Then for every $z$ in $F$,
\[
\big\langle L_n(X, z), p(X)\big\rangle_H = \sum_{x\in H} L_n(z, x) \cdot p(x) = p(z).
\] 
\end{lem}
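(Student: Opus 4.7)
The plan is to argue via uniqueness of the Lagrange basis expansion. The statement asserts that the linear functional $p \mapsto \sum_{x\in H} L_n(z,x)\cdot p(x)$ on the space of polynomials of degree at most $n-1$ coincides with point evaluation at $z$. Since both are linear in $p$, it suffices to check the identity on a basis of this $n$-dimensional space, and the natural choice is the Lagrange basis $(L_y(X))_{y\in H}$ on $H$.

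First I would recall from the discussion preceding the lemma that $L_n(X,Y)$ is the unique symmetric bivariate polynomial of individual degree at most $n-1$ such that for each fixed $y\in H$, the univariate $L_n(X,y)$ agrees on $H$ with the Lagrange function $L_y(X)$; and since both $L_n(X,y)$ and $L_y(X)$ are polynomials of degree at most $n-1$ coinciding on the $n$-element set $H$, they must be equal as polynomials in $F[X]$. Using symmetry of the kernel, this also gives $L_n(y,X)=L_y(X)$ as an element of $F[X]$.

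Next I would write out the Lagrange representation of an arbitrary $p(X)$ of degree at most $n-1$, namely $p(X)=\sum_{y\in H} p(y)\cdot L_y(X)$, which holds as a polynomial identity since both sides have degree at most $n-1$ and agree on the $n$ points of $H$. Substituting $L_y(X)=L_n(X,y)$ and evaluating at $X=z\in F$ yields
\[
p(z)=\sum_{y\in H} p(y)\cdot L_n(z,y),
\]
which, together with the symmetry $L_n(z,y)=L_n(y,z)$, is exactly the claimed identity $\sum_{x\in H} L_n(z,x)\cdot p(x)=p(z)$.

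There is no real obstacle here; the only point deserving care is the justification that $L_n(X,y)$ coincides with the classical Lagrange polynomial $L_y(X)$ as an element of $F[X]$ rather than merely on $H$, which follows from the degree bound and symmetry of the kernel already established in Section \ref{s:LagrangeKernel}. Everything else is a routine consequence of expanding $p(X)$ in the Lagrange basis.
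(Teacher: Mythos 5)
Your proof is correct and rests on the same central fact as the paper's, namely the Lagrange-basis expansion $p(X)=\sum_{y\in H}p(y)\cdot L_n(X,y)$ together with the reproducing/symmetry properties of $L_n$. The paper phrases this as a reduction to the basis case $p(X)=L_n(X,y)$ and computes $\langle L_n(X,z),L_n(X,y)\rangle_H=L_n(y,z)$, whereas you evaluate the expansion directly at $X=z$; the two are essentially identical, and your added remark that $L_n(X,y)=L_y(X)$ holds as an identity in $F[X]$ (not merely on $H$) is a sound and slightly more careful justification of the step the paper takes for granted.
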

\begin{proof}
Since $p(X) = \sum_{y\in H} p(y)\cdot L_n(X,y)$, it suffices to show the claim for $p(X)=L_n(X,y)$, with $y\in H$.
By the property of $L_n(X,y)$, we have $\big\langle L_n(X, z)$, $L_n(X,y) \big\rangle_H =L_n(y,z)$, which by symmetry is equal to $L_n(X,y)$ at $X=z$.
This completes the proof of the Lemma.
\end{proof}

Marlin \cite{Marlin} uses the generalized derivative 
\[
u_H(X,Y)= \frac{Z_H(X)-Z_H(Y)}{X-Y}
\]
instead of $L_n(X,Y)$ (again as element from $F[X,Y]/\langle X^n-1,Y^n-1\rangle$).
For the generalized derivative a similar inner product formula holds:
For every $z\in F$,
\begin{align*}
\big\langle u_H(X, z), p(X)\big\rangle_H &=  \big\langle u_H(X,X)\cdot L_H(X, z), p(X)\big\rangle_H
\\
&= \big\langle L_n(X, z), u_H(X,X)\cdot p(X)\big\rangle_H = p^*(z),
\end{align*}
where $p^*(X) = u_H(X,X)\cdot p(X) \bmod (X^n-1)$.

\end{document}